\documentclass{article}

\usepackage{arxiv}

\usepackage[utf8]{inputenc} 
\usepackage[T1]{fontenc}    
\usepackage{hyperref}       
\usepackage{url}            
\usepackage{booktabs}       
\usepackage{array}
\usepackage{amsfonts}       
\usepackage{nicefrac}       
\usepackage{graphicx}
\usepackage{verbatim}
\usepackage{subcaption}
\usepackage{adjustbox}
\usepackage[numbers,sort&compress]{natbib}
\usepackage{doi}
\usepackage{amsmath}
\usepackage{bm}
\usepackage{amssymb}
\usepackage{float}
\usepackage[section]{placeins} 
\usepackage{etoolbox}
\pretocmd{\subsection}{\FloatBarrier}{}{} 
\usepackage{xcolor}
\usepackage{mathtools}
\usepackage{fancyhdr}
\usepackage{cleveref}
\crefname{appendix}{Appendix}{Appendices}
\Crefname{appendix}{Appendix}{Appendices}
\usepackage{amsmath,amssymb,amsthm}
\usepackage{bm}
\usepackage{mathtools}
\usepackage{stmaryrd}
\usepackage{enumitem}
\usepackage{booktabs}
\usepackage{makecell}
\usepackage{graphicx}
\usepackage{subcaption}
\usepackage[mathscr]{euscript}
\usepackage{textcomp}

\DeclareMathAlphabet{\pazocal}{OMS}{zplm}{m}{n}

\newcommand{\tempvar}{1.0}

\newcommand{\jump}[1]{\llbracket #1 \rrbracket}
\newcommand{\dd}{\,\mathrm{d}}
\newcommand{\R}{\mathbb{R}}
\newcommand{\subdiff}{\partial}

\newcommand{\pos}[1]{\langle #1 \rangle_{+}}

\usepackage{lscape}		
\usepackage{fancyvrb}

\newtheoremstyle{remarkstyle}  
  {5pt}                        
  {5pt}                        
  {}                           
  {}                           
  {\bfseries}                  
  {.}                          
  { }                          
  {}                           
\newtheorem{theorem}{Theorem}[section]

\newtheorem{lemma}[theorem]{Lemma}

\theoremstyle{remarkstyle}
\newtheorem{remark}[theorem]{Remark}

\crefname{figure}{Fig.}{Figs.}
\Crefname{figure}{Figure}{Figures}
\crefname{algorithm}{Algorithm}{Algorithms}
\crefname{equation}{Eq.}{Eqs.}
\crefname{table}{Table}{Tables}
\Crefname{table}{Table}{Tables}
\crefname{section}{Section}{Sections}
\crefname{remark}{Remark}{Remarks}
\Crefname{remark}{Remark}{Remarks}

\usepackage[normalem]{ulem}

\title{
Data-Driven Cohesive Zone Modeling within the Generalized Standard Materials Framework
}

\author{
Sida Hao$^{1}$, 
Jinkyo Han$^{1}$, 
Bahador Bahmani$^{1,2}$\thanks{Corresponding author: \texttt{bahador.bahmani@northwestern.edu}} \\
\\
$^{1}$Department of Mechanical Engineering, Northwestern University, Evanston, IL 60208, USA \\
$^{2}$Theoretical and Applied Mechanics, Northwestern University, Evanston, IL 60208, USA
}

\renewcommand{\shorttitle}{Neural Generalized Standard Cohesive Models}

\hypersetup{
    colorlinks=true,      
    linkcolor=green,      
    urlcolor=blue,        
    citecolor=blue        
}

\date{}

\begin{document}
\maketitle

\vspace{-10pt}

\begin{abstract}
Cohesive zone models are widely used to describe fracture and interfacial failure, yet most formulations prescribe problem-specific analytical traction–separation laws together with phenomenological rules for unloading and reloading, leading to specialized models for different cohesive behaviors. This work develops a unified learnable cohesive formulation within the generalized standard materials framework, in which the response is generated from learned constitutive functions while the underlying thermodynamic structure remains fixed. The surface free energy is decomposed into active and contact contributions, and irreversible damage evolution is governed by a learned mode-dependent damage resistance. The active energy is represented by an input-convex neural network, while the inverse damage resistance is represented by a monotone neural network. Convexity, monotonicity, normalization, and damage irreversibility are incorporated directly into the constitutive representation. Direct parameterization of the inverse resistance yields an explicit damage update and avoids local nonlinear inversion during constitutive evaluation. Material-point studies show that the formulation can represent qualitatively distinct cohesive responses, including plateaus, extended softening tails, irregular softening, nonlinear unloading, distinct Mode I and Mode II behaviors, and several classical mixed-mode cohesive laws. The framework therefore replaces law-specific model construction with a single thermodynamically structured representation capable of learning cohesive responses of broad functional complexity from data.

\end{abstract}

\keywords{
Cohesive Zone Model
\and
Interfacial Damage Model
\and
Traction--Separation Law
\and
Generalized Standard Material
\and
Scientific Machine Learning
}

\section{Introduction}
\label{sec:intro}

Fracture in engineering materials often involves a finite process zone and cannot be described solely by the loss of cohesion across an ideally sharp crack. Within this zone, separation may involve mechanisms such as microcracking, plastic deformation, fiber bridging, interfacial debonding, or adhesive degradation. Cohesive zone models (CZMs) provide a reduced description of these mechanisms by replacing the detailed process-zone physics with a traction--separation relation acting across a displacement jump between separating surfaces \cite{Dugdale1960,Barenblatt1962,hillerborg1976analysis,needleman1987continuum,tvergaard1992relation,xu1994,camacho1996computational,ortiz1999finite}.

CZMs are widely used to model fracture and interfacial failure when the structure and dissipation of the process zone influence the macroscopic response. Many formulations prescribe the traction--separation relation through fixed analytical forms \citep{park2011}, including bilinear \citep{hillerborg1976analysis,geubelle1998impact}, piecewise-linear \citep{de2021piecewise}, and trapezoidal \citep{tvergaard1992relation} laws. These phenomenological models are simple and effective, but their prescribed functional forms can restrict the range of process-zone responses that can be represented in complex interfaces and composite materials \citep{wei2026}.

A further challenge arises in mixed-mode fracture, where normal and tangential separations evolve jointly and their coupling must be represented consistently in the cohesive response \citep{Park2009,mcgarry2014potential,dimitri2015coupled}. Potential-based formulations provide a systematic way to introduce this coupling by deriving the traction vector from a scalar interface potential. For a sufficiently smooth potential, the resulting traction field is integrable and satisfies reciprocal normal--tangential coupling. Early examples include the polynomial potential of \citet{needleman1987continuum} and the exponential potential of \citet{xu1993void}, later used by \citet{xu1994} to model dynamic crack growth and branching. The Park--Paulino--Roesler (PPR) model subsequently introduced separate control of the Mode~I and Mode~II fracture energies, cohesive strengths, and shape parameters \citep{Park2009,Park2012,park2011}.

Despite these advantages, the range of cohesive responses represented by a
potential-based CZM remains limited by the prescribed form of the potential.
Classical formulations typically employ low-parameter analytical functions,
which may be restrictive for experimentally identified cohesive responses
exhibiting plateaus \citep{sorensen2003determination}, extended softening tails
\citep{sorensen1998large}, or irregular and strongly mode-dependent shapes
\citep{ansari2025rotation,sorensen2003determination}. This limitation is particularly important under large-scale bridging, where
the process zone is not small relative to the structural dimensions and the
response can depend on the detailed shape of the cohesive law rather than only
on its peak strength and fracture energy
\citep{bao1992remarks,sorensen1998large}.

A separate limitation concerns irreversible evolution. An interface potential
that depends only on the current separation describes a reversible response
and retains no memory of prior decohesion. It therefore cannot by itself
represent damage accumulation, hysteresis, or irreversible fracture
dissipation. Classical CZMs introduce such behavior through damage or history
variables together with prescribed unloading--reloading rules
\citep{yang1998single,nguyen2001cohesive,hordijk1991growth,kuna2015general}.
A thermodynamically consistent formulation instead requires internal variables
and evolution laws that distinguish recoverable interface energy from
irreversible dissipation and satisfy the dissipation inequality
\citep{mosler2011thermodynamically,dimitri2015coupled}. Moreover, the existence
of an interface potential alone does not guarantee a physically admissible
mixed-mode response; particular potential forms can produce residual or
repulsive tractions and other nonphysical behavior
\citep{mcgarry2014potential,dimitri2015coupled}. These considerations motivate
a constitutive structure in which the stored energy, internal variables, and
irreversible evolution are treated within a common thermodynamic framework.

The generalized standard material (GSM) framework \citep{halphen1975materiaux} provides a systematic thermodynamic structure by combining a Helmholtz
free-energy potential for the recoverable response with a convex dissipation potential governing the evolution of internal variables. Under the standard
GSM assumptions, this construction ensures non-negative dissipation and provides evolution laws for irreversible internal variables, while the
free-energy potential determines the recoverable response at fixed internal state. The framework has been extended to cohesive interfaces in adhesion--friction models \citep{raous1999consistent}, rate-dependent formulations \citep{Leuschner2015}, and rate-independent delamination models \citep{Roubicek2013,Vodicka2017}. Related thermodynamic interface models have also been formulated through damage surfaces and consistency conditions \citep{willam2004interface}. These approaches provide a consistent basis for irreversible cohesive fracture, but their constitutive potentials and evolution functions are generally prescribed analytically. Their representational capacity therefore remains tied to the assumed functional
forms.

Recent advances in data-driven constitutive modeling provide a route to increase constitutive flexibility while retaining physical structure. Since the early use of neural networks for learning material response directly from data \citep{ghaboussi1991}, the field has progressed from black-box constitutive approximations toward models that incorporate physical principles either through the training objective or directly through the model architecture \citep{fuhg2025review}. Thermodynamics-informed approaches impose conditions such as non-negative dissipation through physics-based loss terms \citep{vlassis2021sobolev}, whereas structure-preserving architectures can enforce properties such as convexity, monotonicity, and polyconvexity by construction \citep{as2022mechanics,klein2022polyconvex}. In particular, convex neural networks and neural GSM formulations have been used to learn free-energy and dissipation potentials for a range of bulk constitutive models \citep{thakolkaran2022nn,Flaschel2025,Xu2025,AmiriHezaveh2025,Friedrichs2026}.

Neural constitutive modeling of interfaces has received comparatively less
attention. Existing approaches include automated discovery of
traction--separation laws through deep reinforcement learning
\citep{wang2019meta}, thermodynamics-informed neural representations of
mixed-mode traction--separation behavior \citep{wei2023data}, and neural
cohesive laws for monotonic and cyclic fracture
\citep{tao2022neural,tao2024reconstruction}. Related developments have also
introduced learned potential representations for rate-and-state friction
\citep{liu2025learning}. Despite these advances, a general data-driven
cohesive formulation that combines flexible representation of the local
traction--separation response with the thermodynamic structure of GSM remains
lacking. In particular, thermodynamic restrictions introduced through penalty
terms in the training objective require balancing additional loss
contributions and do not hold exactly by construction. This motivates a
formulation in which the required constitutive properties are embedded
directly in the neural parametrization of the free-energy and dissipation
mechanisms.

To address this gap, this paper develops a thermodynamically structured,
data-driven cohesive zone model within the GSM framework. Rather than
prescribing the traction--separation law and unloading rules directly, the
cohesive response is generated from learned constitutive functions. The active
potential is represented by an input-convex neural network (ICNN)
\citep{amos2017input}, constructed to satisfy the required convexity and
monotonicity conditions, while the inverse damage resistance is represented by
a monotone neural network (MNN) \citep{kim2024scalable}. The latter directly
parametrizes the mode-mixity-dependent inverse resistance and therefore avoids
the numerical inversion of the resistance relation during constitutive
evaluation. We consider both an MNN with unrestricted dependence on mode
mixity and a full-MNN that additionally imposes increasing resistance with mode
mixity, motivated by interfaces whose fracture toughness increases from
Mode~I toward Mode~II. In contrast to approaches that impose thermodynamic
conditions through additional terms in the training objective
\citep{wei2023data}, the required convexity, monotonicity, and normalization
properties are embedded directly in the constitutive parametrization.

The resulting formulation provides an explicit irreversible damage update
without a local nonlinear solve for the damage state. The constitutive update
is differentiable almost everywhere, which permits the neural constitutive
functions to be identified directly from traction--separation data by
gradient-based optimization through the complete loading history.

Numerical experiments demonstrate that the formulation can reproduce a broad range of cohesive responses, including bilinear, trapezoidal, exponential, half-sine, irregular softening, and PPR laws, while retaining the imposed
thermodynamic structure. The examples also show that a single identified model can represent distinct Mode~I and Mode~II responses and predict the response at intermediate mode mixities not included in the calibration data.

The remainder of the paper is organized as follows.
\Cref{sec:gpb_czm} develops the cohesive formulation within the GSM framework,
and \cref{sec:nn_representation} introduces the ICNN and MNN parametrizations
of the active potential and inverse damage resistance.
\Cref{sec:training} defines the calibration dataset and identification
objective.
\Cref{sec:examples} presents the numerical examples for Mode~I and mixed-mode
cohesive responses, including a comparison of the MNN and full-MNN
representations.
\Cref{sec:conclusions} summarizes the main findings.
The appendices collect the convex-analysis background (\cref{app:convex}), 
the convexity results used in the formulation
(\cref{app:composed-convexity}), implementation and hyperparameter details
(\cref{app:hyperparams}), additional mixed-mode results
(\cref{app:mixed-mode-result}), and the reference cohesive laws used in the
classical-model study (\cref{app:dimitri_laws}).

\section{A GSM-Based Cohesive Zone Model}
\label{sec:gpb_czm}

This section first briefly reviews the fundamental ingredients of cohesive zone models and then presents a flexible formulation of cohesive interfaces as an interfacial damage model within the GSM framework. This formulation naturally decomposes the constitutive description into three independent functions, each of which can be parametrized arbitrarily, provided that fundamental thermodynamic requirements --- such as convexity, monotonicity, and other admissibility constraints --- are satisfied.

\subsection{Cohesive Zone Models}
\label{sec:czm_intro}

A cohesive zone model represents the fracture process zone as an internal
interface $\Gamma$ that transmits a cohesive traction
$\boldsymbol{t}$. Let $\boldsymbol{u}^{+}$ and $\boldsymbol{u}^{-}$ denote the
displacement traces on the two sides of the interface, as shown in \cref{fig:czm_diagram}(a). The displacement jump,
or separation vector, is defined by
\begin{equation}
  \boldsymbol{\delta}
  \coloneqq
  \jump{\boldsymbol{u}}
  \coloneqq
  \boldsymbol{u}^{+}-\boldsymbol{u}^{-}.
\end{equation}
A cohesive zone model can be specified by a traction--separation law
$\boldsymbol{t}(\boldsymbol{\delta})$. The traction typically increases to a
cohesive strength and subsequently decreases to zero, at which point a
traction-free crack forms, as illustrated in \cref{fig:czm_diagram}(b). In one
dimension, the area under the traction--separation curve equals the fracture
energy $G_c$.

In two dimensions,
\begin{equation}
  \boldsymbol{\delta}
  =
  \delta_n\boldsymbol{e}_n+\delta_t\boldsymbol{e}_t,
  \qquad
  \delta_n=\boldsymbol{\delta}\cdot\boldsymbol{e}_n,
  \qquad
  \delta_t=\boldsymbol{\delta}\cdot\boldsymbol{e}_t,
\end{equation}
where $\boldsymbol{e}_n$ and $\boldsymbol{e}_t$ are the unit normal and tangent
to the interface. Here, $\delta_n>0$ denotes opening, while $\delta_t$ measures
sliding and its sign specifies the sliding direction. 
The traction is decomposed similarly as
\begin{equation}
  \boldsymbol{t}=t_n\boldsymbol{e}_n+t_t\boldsymbol{e}_t,
  \qquad
  t_n=\boldsymbol{t}\cdot\boldsymbol{e}_n,
  \qquad
  t_t=\boldsymbol{t}\cdot\boldsymbol{e}_t.
\end{equation}
Pure opening, defined by $\delta_n>0$ and $\delta_t=0$, is termed Mode~I,
whereas pure sliding, defined by $\pos{\delta_n}=0$ and $\delta_t\neq0$, is
termed Mode~II. General separation is mixed-mode. Following
\citet{turon2006damage}, the mode mixity is quantified by
\begin{equation}
  \beta(\boldsymbol{\delta})
  =
  \frac{\lvert\delta_t\rvert}
       {\pos{\delta_n}+\lvert\delta_t\rvert}
  \in[0,1],
  \qquad
  \pos{\delta_n}\coloneqq \max(\delta_n,0),
  \label{eq:mode-mixity}
\end{equation}
with $\beta=0$ for Mode~I and $\beta=1$ for Mode~II. By convention,
$\beta=0$ when the denominator vanishes.

\begin{figure}[htbp]
  \centering
  \begin{subfigure}[b]{0.32\linewidth}
    \centering
    \includegraphics[width=0.99\linewidth]{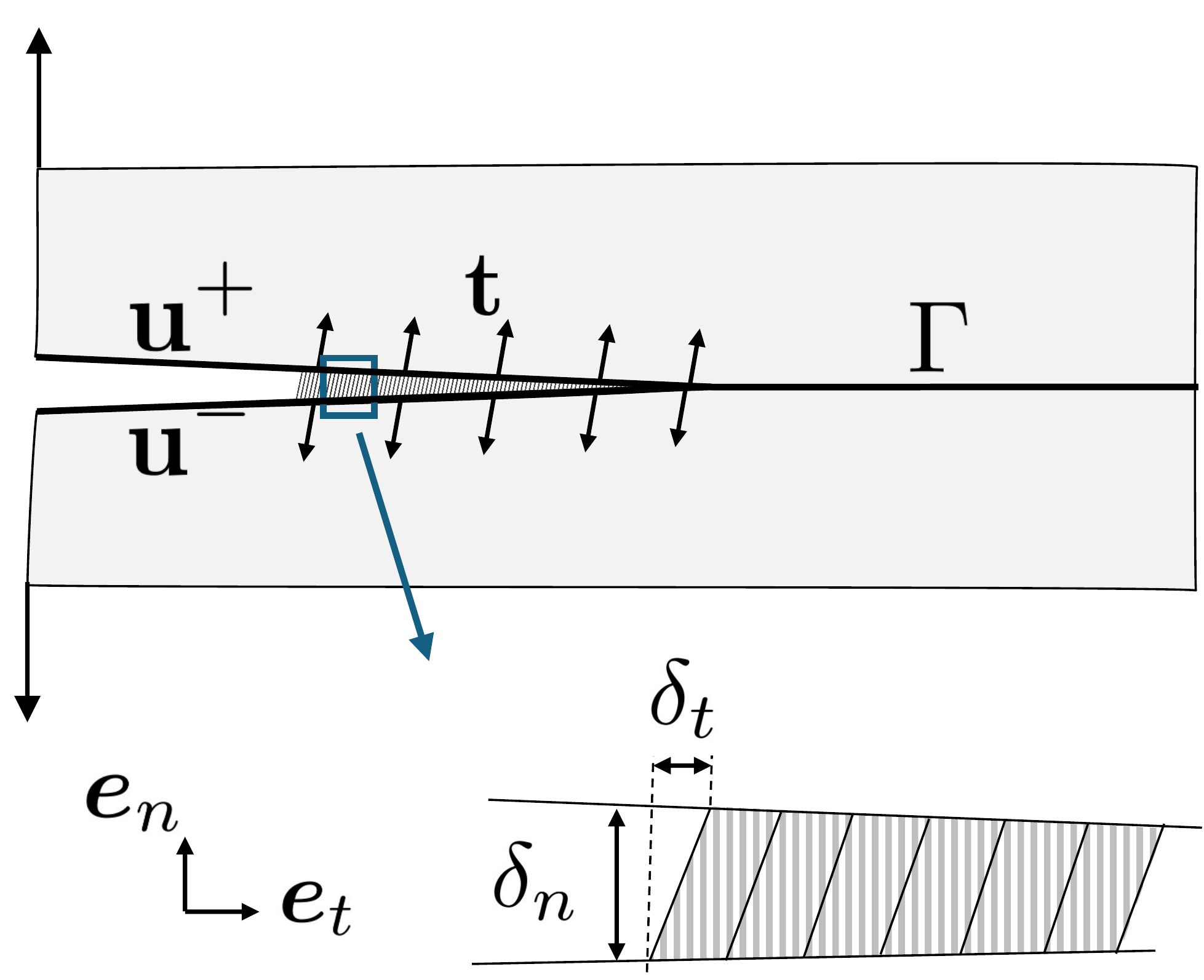}
    \caption{\label{fig:czm_diagram_crack} Opening crack and process zone}
  \end{subfigure}
  \hspace{0.03\linewidth}
  \begin{subfigure}[b]{0.32\linewidth}
    \centering
    \includegraphics[width=0.9\linewidth]{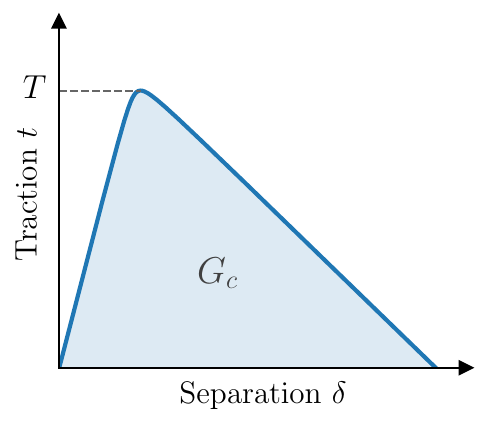}
    \caption{\label{fig:czm_diagram_tsl} Traction--separation law}
  \end{subfigure}
  \caption{Cohesive zone model of an opening crack. (a) A process zone ahead of the traction-free crack bridges the two faces with a traction $\boldsymbol{t}$ resisting their separation $\boldsymbol{\delta}$; the interface itself is denoted $\Gamma$. (b) The behavior is governed by a traction--separation law $\boldsymbol{t}(\boldsymbol{\delta})$: the traction rises to a cohesive strength $T$ and then softens to zero, with the area under the curve equal to the fracture energy $G_c$. The schematic shows a 1D model for clarity. In general, $\boldsymbol{t}$ and $\boldsymbol{\delta}$ are vectors.}
  \label{fig:czm_diagram}
\end{figure}

The remainder of this section formulates the traction--separation law within
the generalized standard materials framework, which ensures thermodynamic
admissibility by construction.

\subsection{A Generalized Standard Formulation}

This section formulates the proposed rate-independent cohesive law within the
generalized standard materials framework
\citep{halphen1975materiaux,raous1999consistent}. The free energy determines
the state laws, while a convex dissipation potential and the associated
normality relation govern damage evolution. The model is specified by an
active energy, a contact energy, and a scalar damage-resistance function.
Relevant concepts from convex analysis are summarized in \cref{app:convex}.

All energetic quantities and powers are measured per unit reference area of
the interface $\Gamma$. The local state is described by the separation
$\boldsymbol{\delta}$ and a scalar damage variable
\begin{equation}
  d\in[0,1],
  \qquad
  d=0 \ \text{(intact)},
  \qquad
  d=1 \ \text{(fully decohered)}.
\end{equation}

\subsubsection{Surface Free Energy}
\label{sec:surf-free-energ}
Within the GSM framework, the constitutive response is derived from a surface free energy $\psi(\boldsymbol{\delta},d)$. The free surface energy is required to be non-negative, convex, and continuously
differentiable with respect to $\boldsymbol{\delta}$ for every fixed
$d\in[0,1]$.

To distinguish the interface mechanisms associated with separation and
contact, we adopt the tension--compression decomposition \citep{kikuchi1988contact,willam2004interface}
\begin{equation}
  \psi(\boldsymbol{\delta},d)
  =
  (1-d)\,\psi^{e}_{+}(\boldsymbol{\delta}_{+})
  +
  \psi^{e}_{-}(\pos{-\delta_n}),
  \qquad
  \boldsymbol{\delta}_{+}
  \coloneqq
  \pos{\delta_n}\boldsymbol{e}_n+\delta_t\boldsymbol{e}_t.
  \label{eq:psi}
\end{equation}
The active energy $\psi^{e}_{+}$ represents the recoverable energy associated
with opening and sliding and is degraded by damage. The contact energy
$\psi^{e}_{-}$ represents the compressive normal penalty response and is not degraded.
This decomposition is a modeling assumption of the present work. It permits
the cohesive and contact responses to be treated separately and prevents
damage from degrading the compressive normal response.

A sufficient structural condition for the total free energy in
\cref{eq:psi} to satisfy the stated properties is that the two composed
contributions, regarded as functions of $\boldsymbol{\delta}$, are
individually non-negative, convex, and continuously differentiable. We further
impose the normalization
\begin{equation}
  \psi^{e}_{+}(\boldsymbol{0})=0,
  \qquad
  \psi^{e}_{-}(0)=0.
  \label{eq:energy-normalization}
\end{equation}
Since $1-d\ge0$, these conditions ensure that $\psi(\cdot,d)$ is
non-negative, convex, and continuously differentiable for every fixed
$d\in[0,1]$. 

The interface traction and the damage-driving force follow from the free
energy as
\begin{align}
  \boldsymbol{t}
  &\coloneqq
  \frac{\partial \psi}{\partial \boldsymbol{\delta}}
  =
  (1-d)
  \frac{\partial}{\partial \boldsymbol{\delta}}
  \left[
    \psi^{e}_{+}(\boldsymbol{\delta}_{+})
  \right]
  +
  \frac{\partial}{\partial \boldsymbol{\delta}}
  \left[
    \psi^{e}_{-}(\pos{-\delta_n})
  \right],
  \label{eq:traction_state_law}
  \\
  Y
  &\coloneqq
  -\frac{\partial \psi}{\partial d}
  =
  \psi^{e}_{+}(\boldsymbol{\delta}_{+}).
  \label{eq:damage_driving_force}
\end{align}
Here, $\boldsymbol{t}$ is the traction transmitted across the interface, and
$Y$ is the thermodynamic force conjugate to damage. Under pure normal
compression, $\delta_n\le0$ and $\delta_t=0$, one has
$\boldsymbol{\delta}_{+}=\boldsymbol{0}$ and hence $Y=0$. Closed sliding may still drive damage because the tangential component
remains active.
At $d=1$, the active opening and shear tractions vanish, whereas the contact
response remains active.

Cohesive laws commonly assume symmetry with respect to reversal of the
tangential separation
\citep{xu1994,willam2004interface,van2006improved,Park2009,dimitri2015coupled}.
Under this assumption, the active energy can be expressed as
\begin{equation}
  \psi^{e}_{+}
  =
  \psi^{e}_{+}\bigl(\pos{\delta_n},\lvert\delta_t\rvert\bigr).
\end{equation}
The corresponding composed active energy is convex in
$\boldsymbol{\delta}$ if and only if $\psi^{e}_{+}$ is jointly convex and
non-decreasing in both arguments; see
\cref{lem:positive-magnitude-composition}. Consequently, for
$\delta_t\neq0$,
\begin{equation*}
  t_t
  =
  (1-d)
  \frac{\partial\psi^{e}_{+}}
       {\partial\lvert\delta_t\rvert}
  \frac{\delta_t}{\lvert\delta_t\rvert}.
\end{equation*}
Since $\psi^{e}_{+}$ is non-decreasing in $\lvert\delta_t\rvert$, the
tangential traction has the same sign as $\delta_t$. This structure precludes
a reversal of the tangential traction relative to the tangential separation,
a nonphysical response that can arise in some potential-based cohesive laws
under mixed-mode loading \citep{mcgarry2014potential}.

If the two tangential directions are instead distinguished, the active energy
takes the form
\begin{equation}
  \psi^{e}_{+}
  =
  \psi^{e}_{+}\bigl(\pos{\delta_n},\delta_t\bigr).
\end{equation}
In this case, the corresponding composed energy is convex in
$\boldsymbol{\delta}$ if and only if $\psi^{e}_{+}$ is jointly convex and
non-decreasing in its first argument; see \cref{lem:composed-convexity}. No monotonicity condition is required
with respect to the signed tangential argument $\delta_t$.

Similarly, the contact contribution
$\psi^{e}_{-}(\pos{-\delta_n})$ is convex in $\boldsymbol{\delta}$ if and only
if $\psi^{e}_{-}$ is convex and non-decreasing in its argument.

\subsubsection{Dissipation Potential}
\label{sec:dissip-poten}

We next introduce the dissipation potential
$\phi(\,\cdot\,,d,\beta):\R\to\R\cup\{+\infty\}$, where $\beta$ is the
current mode mixity defined in \cref{eq:mode-mixity}. For each fixed
$(d,\beta)$, the function $\phi$ is assumed to be proper, non-negative,
convex, and lower semicontinuous in $\dot d$, with
$\phi(0,d,\beta)=0$. Rate independence is imposed through positive
homogeneity of degree one,
\begin{equation}
  \phi(c\dot d,d,\beta)
  =
  c\,\phi(\dot d,d,\beta),
  \qquad c>0.
  \label{eq:homog}
\end{equation}

Because $d$ is restricted to $[0,1]$, the admissible damage rates depend on
the current state. Irreversibility requires $\dot d\geq0$ for $d<1$, while
saturation at complete decohesion requires $\dot d=0$ at $d=1$. We therefore
define
\begin{equation}
  \mathcal{K}(d)
  =
  \begin{cases}
    [0,\infty), & 0\leq d<1,\\
    \{0\},      & d=1.
  \end{cases}
  \label{eq:rate-cone}
\end{equation}
The dissipation potential is finite only for $\dot d\in \mathcal{K}(d)$. For a scalar
damage variable, convexity and positive homogeneity of degree one then imply
that $\phi$ is linear in $\dot d$ on the admissible branch.

For $d<1$, convexity and positive homogeneity of degree one imply that the
dissipation potential is linear in $\dot d$ on the admissible half-line
$[0,\infty)$. Including the constraint $\dot d\in \mathcal{K}(d)$, it takes the form
\begin{equation}
  \phi(\dot d,d,\beta)
  =
  \begin{cases}
    r(d,\beta)\,\dot d, & \dot d\in \mathcal{K}(d),\\
    +\infty,            & \dot d\notin \mathcal{K}(d),
  \end{cases}
  \qquad
  r(d,\beta)\geq0,
  \label{eq:phi-reduced}
\end{equation}
where $r(d,\beta)$ is the damage resistance. The function $r$ specifies the
rate-independent dissipation and its dependence on the mode mixity.

The generalized normality relation \citep{halphen1975materiaux,raous1999consistent} gives 
\begin{equation}
  Y
  \in
  \subdiff_{\dot d}\phi(\dot d,d,\beta).
  \label{eq:normality}
\end{equation}
Since $\phi$ is convex and positively homogeneous of degree one,
\cref{eq:normality} implies
\begin{equation}
  Y\dot d
  =
  \phi(\dot d,d,\beta)
  \ge0.
\end{equation}
Together with the state relations in \cref{eq:damage_driving_force,eq:traction_state_law}, this yields the
local isothermal dissipation inequality
\begin{equation}
  \mathcal{D}
  =
  \boldsymbol{t}\cdot\dot{\boldsymbol{\delta}}
  -
  \dot\psi
  =
  Y\dot d
  \ge0.
  \label{eq:dissipation}
\end{equation}

For $d<1$, the normality relation is equivalent to the KKT conditions
\begin{equation}
  f(Y,d,\beta)
  \coloneqq
  Y-r(d,\beta)
  \le0,
  \qquad
  \dot d\ge0,
  \qquad
  \dot d\,f(Y,d,\beta)=0.
  \label{eq:KKT}
\end{equation}
Hence, damage remains fixed while $Y<r(d,\beta)$ and evolves only when
$Y=r(d,\beta)$. The activation function $f$ is therefore derived from the
dissipation potential rather than prescribed independently. At $d=1$, the
constraint $\mathcal{K}(1)=\{0\}$ prevents further damage evolution.

Because the affine degradation law in \cref{eq:psi} makes
$Y=\psi^{e}_{+}(\boldsymbol{\delta}_{+})$ independent of $d$, the KKT
conditions admit an explicit update when $r(\cdot,\beta)$ is continuous and
strictly increasing. Define
\begin{equation}
  d_{\mathrm{tr}}(t)
  =
  \begin{cases}
    0,
    & Y(t)\le r(0,\beta(t)),\\
    R\!\bigl(Y(t),\beta(t)\bigr),
    & r(0,\beta(t))<Y(t)<r(1,\beta(t)),\\
    1,
    & Y(t)\ge r(1,\beta(t)),
  \end{cases}
  \label{eq:trial-damage}
\end{equation}
where $R(Y,\beta) \coloneqq r^{-1}(Y,\beta)$ denotes the inverse of $r$ with respect to $d$, i.e., $R(r(d,\beta),\beta)=d$. The damage variable then evolves according to
\begin{equation}
  d(t)
  =
  \max_{\tau \le t} d_{\mathrm{tr}}(\tau).
  \label{eq:damage_update}
\end{equation}

The running maximum guarantees damage irreversibility, while the definition of $d_{\mathrm{tr}}$ ensures that $0\le d\le1$. Damage initiates when the driving force reaches $Y_0(\beta)=r(0,\beta)$. The strict monotonicity of $r$ with respect to $d$ guarantees that the inverse $R$ in \cref{eq:trial-damage} is single-valued. If $r$ were only non-decreasing, several values of $d$ could correspond to the same value of $Y$, so the inverse would not be uniquely defined. In addition, a forward parameterization of $r$ would generally require solving $r(d,\beta)=Y$ for $d$ during each constitutive update. This additional local nonlinear solve would increase the computational cost of inference and complicate training because gradients must be propagated through the solver.

For complete decohesion at fixed mode mixity, the dissipated energy per unit
reference area is defiend as,
\begin{equation}
  G_c(\beta)
  \coloneqq
  \int_{\text{process}}Y\dot d\,\dd t
  =
  \int_0^1 r(d,\beta)\,\dd d.
  \label{eq:Gc}
\end{equation}
The integral of $r$ therefore determines the mode-dependent fracture energy,
with $G_{Ic}=G_c(0)$ and $G_{IIc}=G_c(1)$, while the variation of $r$ with
$d$ determines how this energy is dissipated during damage growth. If
$\beta$ varies during the process, the dissipated energy becomes
$\int_0^1 r(d,\beta(d))\,\dd d$ and is generally path-dependent.

\begin{remark}
Thermodynamic admissibility requires $r\ge0$. Continuity and strict monotonicity of $r$ with respect to $d$ are additional modeling assumptions that ensure a well-defined, single-valued inverse $R$. These assumptions permit the explicit damage update above and avoid an additional local nonlinear solve.
\end{remark}

\subsection{Specialization of the Constitutive Functions}
\label{sec:specialization}

The constitutive model is defined by three functions: the active energy
$\psi^{e}_{+}$, the contact energy $\psi^{e}_{-}$, and the damage resistance
$r(d,\beta)$. Their specific forms are introduced below.

\paragraph{Contact potential.}
The compressive contact response is represented using the standard quadratic penalty potential \citep{kikuchi1988contact},
\begin{equation}
  \psi^{e}_{-}\bigl(\pos{-\delta_n}\bigr)
  =
  \frac{1}{2}K_{nc}\pos{-\delta_n}^{2},
  \qquad
  K_{nc}>0,
  \label{eq:psi-minus}
\end{equation}
where $\pos{-\delta_n}=\max(-\delta_n,0)$ is the penetration measure and
$K_{nc}$ is the normal penalty stiffness. The potential is non-negative,
convex, non-decreasing in its argument, and continuously differentiable, with
zero value and slope at $\delta_n=0$. A sufficiently large $K_{nc}$ limits
negative normal separation and provides a penalty approximation of the
impenetrability constraint.

\paragraph{Active potential.}
We consider the shear-symmetric case discussed in \cref{sec:surf-free-energ} and represent the active potential $\psi^{e}_{+}$ by an input-convex neural network (ICNN) \citep{amos2017input} with inputs $(\pos{\delta_n},\lvert\delta_t\rvert)$. The network is constrained to be jointly convex and non-decreasing in both inputs. The tangent normalization introduced in \cref{sec:nn_representation:icnn} preserves these properties while imposing zero energy and zero traction at the undeformed state. In the numerical examples, we use $\psi^{e}\equiv\psi^{e}_{+}$ for brevity.

\paragraph{Damage resistance.}
For the reasons discussed in \cref{sec:dissip-poten}, we directly parameterize the inverse resistance $R(Y,\beta)=r^{-1}(Y,\beta)$ and require it to be strictly increasing in $Y$. This condition is equivalent to strict monotonicity of $r(d,\beta)$ with respect to $d$. We represent $R$ using the partially monotone neural network of \citet{kim2024scalable}, with monotonicity imposed with respect to $Y$ and unrestricted dependence on $\beta$. We refer to this architecture as the MNN. For interfaces whose fracture toughness increases with mode mixity
\citep{cao1989experimental,Hutchinson1991,liechti1992asymmetric,
ansari2025rotation}, we additionally consider a fully monotone variant, denoted as the full-MNN, in which $R(Y,\beta)$ is constrained to be increasing in $Y$ and decreasing in $\beta$. The latter condition is equivalent to requiring $r(d,\beta)$ to increase with $\beta$ at fixed damage. This additional monotonicity is introduced as an empirically motivated inductive bias rather than a general constitutive requirement. It is particularly useful when the calibration data sparsely sample the mode-mixity space.

\begin{remark}
\label[remark]{rem:tangential-3d}
In three dimensions, the tangential separation
$\boldsymbol{\delta}_t$ is a vector in the interface plane. Tangential symmetry
then generalizes to \emph{in-plane isotropy}, whereby the potentials depend on
$\boldsymbol{\delta}_t$ only through its magnitude
$\lVert\boldsymbol{\delta}_t\rVert$ \citep{Leuschner2015}.
The composition result in \cref{lem:composed-convexity} extends directly to
this vector-valued setting. Under this assumption, the model does not distinguish between Mode~II and Mode~III and cannot represent in-plane shear anisotropy or non-coaxial tangential traction--separation responses; see \cref{sec:conclusions}.
\end{remark}

\section{Neural Network Parametrization}
\label{sec:nn_representation}

We use neural-network parametrizations for the active potential $\psi^{e}_{+}$ and the inverse damage resistance $R=r^{-1}$, both inferred indirectly from measured traction--separation data. The architectures are constructed to satisfy the required convexity and monotonicity conditions by construction. The physical normalizations, including a stress-free undeformed state for $\psi^{e}_{+}$ and zero damage at zero driving force for $R$, are also imposed exactly. Thus, the constitutive structure is embedded directly into the parametrization, without additional penalty terms in the data-misfit loss. The two network architectures are described below.

\subsection{Input-Convex Neural Network}
\label{sec:nn_representation:icnn}

The active potential is constructed from a raw ICNN potential
$\hat{\psi}$ \citep{amos2017input} with the active kinematic measures
$\boldsymbol{a}=(\pos{\delta_n},\lvert\delta_t\rvert)$ introduced in
\cref{sec:surf-free-energ} as inputs. The ICNN propagates the input through a
convex hidden path $\boldsymbol{z}$ together with direct input connections.
For $\ell=1,\dots,L$,
\begin{equation}
  \boldsymbol{z}^{(0)}
  =
  \sigma\!(
    \tilde{\boldsymbol{W}}_a^{(0)}\boldsymbol{a}
    +
    \boldsymbol{b}^{(0)}
  ),
  \qquad
  \boldsymbol{z}^{(\ell)}
  =
  \sigma\!(
    \tilde{\boldsymbol{W}}_z^{(\ell)}
    \boldsymbol{z}^{(\ell-1)}
    +
    \tilde{\boldsymbol{W}}_a^{(\ell)}
    \boldsymbol{a}
    +
    \boldsymbol{b}^{(\ell)}
  ),
  \qquad
  \hat{\psi}(\boldsymbol{a})
  =
  \boldsymbol{z}^{(L)},
  \label{eq:icnn_layer}
\end{equation}
where the output $\boldsymbol{z}^{(L)}$ is scalar. The constrained weights are
parameterized as
\begin{equation}
  \tilde{\boldsymbol{W}}_z^{(\ell)}
  =
  \mathrm{softplus}(\boldsymbol{W}_z^{(\ell)})
  \ge 0,
  \qquad
  \ell=1,\dots,L,
  \qquad
  \tilde{\boldsymbol{W}}_a^{(\ell)}
  =
  \mathrm{softplus}(\boldsymbol{W}_a^{(\ell)})
  \ge 0,
  \qquad
  \ell=0,\dots,L,
\end{equation}
and $\sigma$ is a convex and non-decreasing activation function, taken here as
softplus. These constraints make $\hat{\psi}$ jointly convex and
non-decreasing with respect to $\boldsymbol{a}$. In the original ICNN
construction \citep{amos2017input}, non-negativity is required for the
hidden-to-hidden weights, whereas the direct input weights may remain
unconstrained. We additionally constrain the direct input weights to be
non-negative so that the network is both convex and non-decreasing in
$\boldsymbol{a}$, as required by the present constitutive formulation.
The architecture of the ICNN in this work is illustrated in \cref{fig:icnn_scheme}.

\begin{figure}[htbp]
  \centering
  \includegraphics[width=0.5\linewidth]{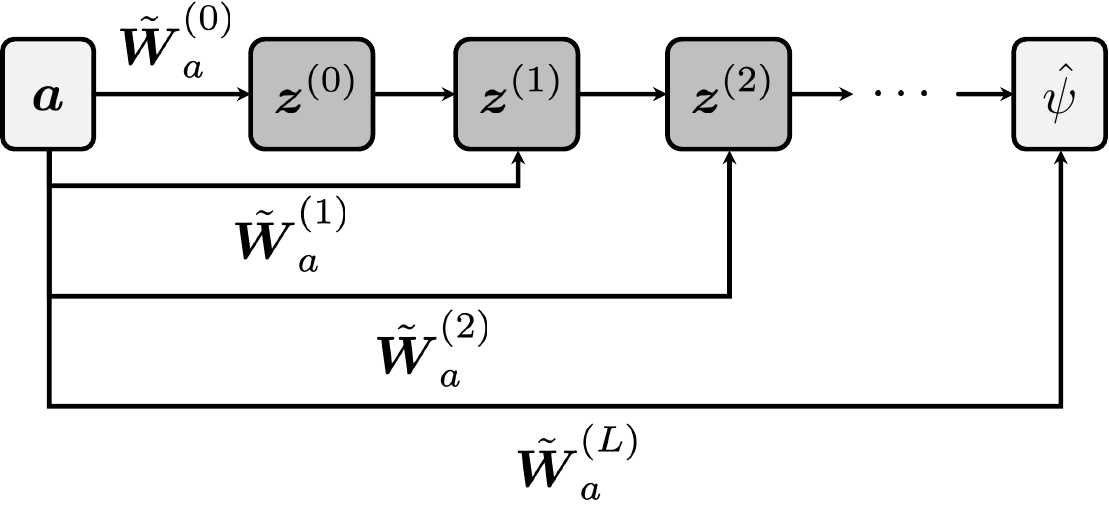}
  \caption{Schematic of the ICNN used to evaluate
  $\hat{\psi}(\boldsymbol{a})$.}
  \label{fig:icnn_scheme}
\end{figure}

A raw ICNN does not, in general, satisfy zero energy and zero traction at the
undeformed state. We impose these conditions through the first-order tangent
correction
\begin{equation}
  \psi^{e}_{+}(\boldsymbol{a})
  =
  \hat{\psi}(\boldsymbol{a})
  -
  \hat{\psi}(\boldsymbol{0})
  -
  \nabla_{\boldsymbol{a}}\hat{\psi}(\boldsymbol{0})^{\!\top}
  \boldsymbol{a}.
  \label{eq:tangent_subtraction}
\end{equation}
Since the correction is affine in $\boldsymbol{a}$, it preserves convexity; see \cref{lem:affine-composition}.
Moreover, for the positive-weight softplus ICNN used here,
$\nabla_{\boldsymbol{a}}\hat{\psi}$ is componentwise non-decreasing.
Therefore, for $\boldsymbol{a}\ge\boldsymbol{0}$,
\begin{equation}
  \nabla_{\boldsymbol{a}}\psi^{e}_{+}(\boldsymbol{a})
  =
  \nabla_{\boldsymbol{a}}\hat{\psi}(\boldsymbol{a})
  -
  \nabla_{\boldsymbol{a}}\hat{\psi}(\boldsymbol{0})
  \ge
  \boldsymbol{0},
\end{equation}
so the required non-decreasing dependence on the active kinematic measures is
also preserved. By construction,
\begin{equation}
  \psi^{e}_{+}(\boldsymbol{0})=0,
  \qquad
  \nabla_{\boldsymbol{a}}\psi^{e}_{+}(\boldsymbol{0})
  =
  \boldsymbol{0}.
\end{equation}
Finally, convexity of $\hat{\psi}$ implies that its tangent plane at
$\boldsymbol{a}=\boldsymbol{0}$ is a global underestimator. Hence,
$\psi^{e}_{+}(\boldsymbol{a})\ge0$. The resulting active potential is therefore
convex, non-negative, and non-decreasing, with zero energy and zero traction at
the undeformed state. Related normalization strategies have been used in
convex neural-network constitutive models to impose the stress-free reference
state \citep{thakolkaran2022nn,Flaschel2025}.

The learnable parameters are
\begin{equation}
  \boldsymbol{\theta}_{\psi}
  =
  \left\{
    \boldsymbol{W}_z^{(\ell)},
    \boldsymbol{W}_a^{(\ell)},
    \boldsymbol{b}^{(\ell)}
  \right\}_{\ell=1}^{L}
  \cup
  \left\{
    \boldsymbol{W}_a^{(0)},
    \boldsymbol{b}^{(0)}
  \right\}.
\end{equation}

The construction above permits coupling between the normal and tangential
responses. 
Many analytical cohesive laws \citep{camanho2003numerical,turon2006damage,alfano2001finite} instead take the elastic response to be uncoupled
so that tangential sliding does not alter the normal traction.
A generic ICNN does not enforce this structure and may therefore
learn artificial normal--tangential coupling from limited or noisy data. When
elastic mode coupling is not desired, we use a partitioned architecture in
which $\pos{\delta_n}$ and $\lvert\delta_t\rvert$ are passed through independent
ICNNs and their outputs are summed to form $\psi^{e}_{+}$. This construction
eliminates mixed normal--tangential derivatives by construction while
preserving the convexity and monotonicity properties described above.

\begin{remark}
\label[remark]{rem:icnn_smoothness}
The tangent correction preserves convexity for both the coupled and
partitioned architectures. A subtlety arises in smoothness with respect to
$\boldsymbol{\delta}$ because the inputs $\pos{\delta_n}$ and
$\lvert\delta_t\rvert$ are non-smooth at zero. At the pristine state, the
zero-gradient normalization removes the associated non-differentiability and
makes the composed potential differentiable. For a coupled ICNN, however, this
property does not generally extend along the coordinate axes. A discontinuity
in the normal traction may occur across $\delta_n=0$ when $\delta_t\neq0$, and
a discontinuity in the tangential traction may occur across $\delta_t=0$ when
$\delta_n>0$, because the corresponding derivatives of the active potential
need not vanish on these axes.
The partitioned form
$
\hat\psi
=
\hat\psi_n(\pos{\delta_n})
+
\hat\psi_t(\lvert\delta_t\rvert)
$
avoids these discontinuities. After the tangent correction, each contribution
has zero slope at the origin. Consequently, the composed potential
$\psi^{e}_{+}$ is globally convex and $C^1$ with respect to
$\boldsymbol{\delta}$.
\end{remark}

\subsection{Monotone Neural Network}
\label{sec:nn_representation:mnn}

We use an adaptation of the scalable monotonic neural network (SMNN) of
\citet{kim2024scalable} to parametrize the inverse resistance function
$R(Y,\beta)$. We refer to the resulting architectures as the MNN and
full-MNN.

For the MNN, $Y$ is treated as the monotone input, whereas $\beta$ remains
unrestricted. Following the partially connected structure of
\citet{kim2024scalable}, each hidden layer contains three streams: an
exponentiated stream, a confluence stream, and an unrestricted nonlinear
stream, denoted by the subscripts $e$, $c$, and $n$, respectively. The
first-layer inputs are
\begin{equation}
  \boldsymbol{x}^{(0)}_e=Y,
  \qquad
  \boldsymbol{x}^{(0)}_c
  =
  \boldsymbol{x}^{(0)}_n
  =
  \beta.
  \label{eq:mnn_input}
\end{equation}
For $\ell=1,\dots,L$, the streams are evaluated as
\begin{align}
  \boldsymbol{z}_e^{(\ell)}
  &=
  \tanh(
    \exp(\boldsymbol{W}_e^{(\ell)})
    \boldsymbol{x}_e^{(\ell-1)}
    +
    \boldsymbol{b}_e^{(\ell)}
  ),
  \\
  \boldsymbol{z}_c^{(\ell)}
  &=
  \tanh(
    \boldsymbol{W}_c^{(\ell)}
    \boldsymbol{x}_c^{(\ell-1)}
    +
    \boldsymbol{b}_c^{(\ell)}
  ),
  \\
  \boldsymbol{z}_n^{(\ell)}
  &=
  \tanh(
    \boldsymbol{W}_n^{(\ell)}
    \boldsymbol{x}_n^{(\ell-1)}
    +
    \boldsymbol{b}_n^{(\ell)}
  ).
  \label{eq:mnn_layer}
\end{align}
The streams are coupled between consecutive layers according to
\begin{equation}
  \boldsymbol{x}^{(\ell)}_e
  =
  \boldsymbol{z}^{(\ell)}_e
  \oplus
  \boldsymbol{z}^{(\ell)}_c,
  \qquad
  \boldsymbol{x}^{(\ell)}_c
  =
  \boldsymbol{x}^{(\ell)}_n
  =
  \boldsymbol{z}^{(\ell)}_n,
  \label{eq:mnn_coupling}
\end{equation}
where $\oplus$ denotes vector concatenation. The scalar output is
\begin{equation}
  g_{\boldsymbol{\theta}}(Y,\beta)
  =
  \exp(\boldsymbol{W}^{(L+1)})
  (
    \boldsymbol{z}_e^{(L)}
    \oplus
    \boldsymbol{z}_c^{(L)}
    \oplus
    \boldsymbol{z}_n^{(L)}
  )
  +
  b^{(L+1)}.
  \label{eq:mnn_out}
\end{equation}
The exponentiated weights are strictly positive. Together with the strictly
increasing $\tanh$ activation, this construction makes
$g_{\boldsymbol{\theta}}$ strictly increasing with respect to $Y$, while its
dependence on $\beta$ remains unrestricted. 
Here, $\tanh$ is adopted to obtain a smooth and strictly monotone parametrization.
The architecture of the MNN in this work is illustrated in \cref{fig:mnn_scheme}.

\begin{figure}[htbp]
  \centering
  \includegraphics[width=0.5\linewidth]{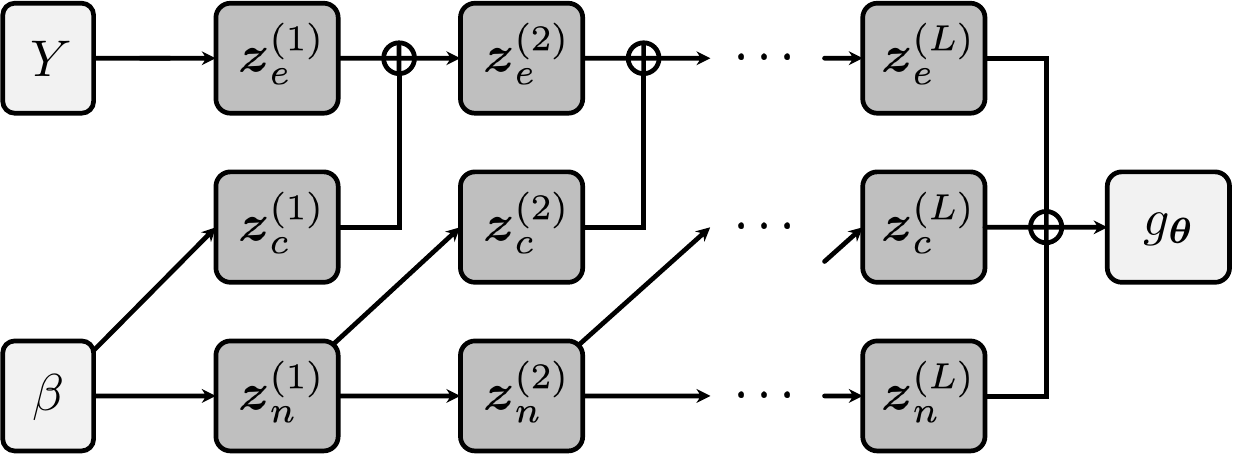}
  \caption{Schematic of the MNN used to evaluate
  $g_{\boldsymbol{\theta}}(Y,\beta)$.}
  \label{fig:mnn_scheme}
\end{figure}

The raw network output has an arbitrary offset and is not directly tied to the
damage scale. We therefore define the shifted output
\begin{align}
  &\hat R(Y,\beta)
  =
  g_{\boldsymbol{\theta}}(Y,\beta)
  -
  g_{\boldsymbol{\theta}}(0,\beta)
  -
  \operatorname{softplus}(b_0),
  \label{eq:mnn_inverse}
  \\
  &d_{\mathrm{tr}}
  =
  \operatorname{clamp}(\hat R(Y,\beta),0,1).
  \label{eq:mnn_trial_damage}
\end{align}
where $b_0$ is a learnable scalar. Since
\begin{equation}
  \hat R(0,\beta)
  =
  -\operatorname{softplus}(b_0)
  <
  0,
\end{equation}
the trial damage satisfies $d_{\mathrm{tr}}(0,\beta)=0$ for every mode
mixity. When the shifted network output crosses zero, damage initiates at the
mode-dependent threshold $Y_0(\beta)$ satisfying
\begin{equation}
  g_{\boldsymbol{\theta}}(Y_0,\beta)
  -
  g_{\boldsymbol{\theta}}(0,\beta)
  =
  \operatorname{softplus}(b_0).
\end{equation}
Thus, the offset introduces a strictly positive damage-onset threshold without
restricting the dependence of that threshold on $\beta$.

We also consider a more restrictive variant, referred to as the full-MNN, for
interfaces whose resistance increases with mode mixity. In this case, $R$ is required to increase with $Y$ and
decrease with $\beta$. Both variables can therefore be passed through the
monotone stream after replacing $\beta$ by $1-\beta$:
\begin{equation}
  \boldsymbol{x}^{(0)}_e
  =
  (Y,1-\beta).
\end{equation}
Since no unrestricted input remains, the confluence and nonlinear streams are
omitted. We define
\begin{equation}
  \hat R(Y,\beta)
  =
  g_{\boldsymbol{\theta}}(Y,1-\beta)
  -
  g_{\boldsymbol{\theta}}(0,1)
  -
  \operatorname{softplus}(b_0).
  \label{eq:fmnn_inverse}
\end{equation}

The reference value in \cref{eq:fmnn_inverse} is chosen to be independent of
$\beta$ so that the imposed monotonicity with respect to mode mixity is
preserved. Since $g_{\boldsymbol{\theta}}$ is increasing in $1-\beta$, this
choice also ensures $\hat R(0,\beta)<0$ and hence
$d_{\mathrm{tr}}(0,\beta)=0$ for all $\beta$. The resulting onset threshold
$Y_0(\beta)$ is non-decreasing with $\beta$ by construction.

The learnable parameters of the MNN are
\begin{equation}
  \boldsymbol{\theta}_{R}
  =
  \left\{
    \boldsymbol{W}_e^{(\ell)}, \boldsymbol{b}_e^{(\ell)},
    \boldsymbol{W}_c^{(\ell)}, \boldsymbol{b}_c^{(\ell)},
    \boldsymbol{W}_n^{(\ell)}, \boldsymbol{b}_n^{(\ell)}
  \right\}_{\ell=1}^{L}
  \cup
  \left\{
    \boldsymbol{W}^{(L+1)}, b^{(L+1)}, b_0
  \right\},
  \label{eq:mnn_params}
\end{equation}
whereas the full-MNN omits the confluence and nonlinear streams, so
\begin{equation}
  \boldsymbol{\theta}_{R}
  =
  \left\{
    \boldsymbol{W}_e^{(\ell)}, \boldsymbol{b}_e^{(\ell)}
  \right\}_{\ell=1}^{L}
  \cup
  \left\{
    \boldsymbol{W}^{(L+1)}, b^{(L+1)}, b_0
  \right\}.
  \label{eq:fmnn_params}
\end{equation}

\begin{remark}
\label[remark]{rem:univariate_mnn}
When mode mixity is absent, such as in a one-dimensional cohesive problem, the
unrestricted streams are unnecessary. The architecture then reduces to a
univariate monotone neural network with positive weights and non-decreasing
activation functions \citep{lang2005monotonic,daniels2010monotone}.
\end{remark}

\section{Dataset and Identification Objective}
\label{sec:training}

Because damage evolves irreversibly, the separation--traction measurements
cannot be treated as independent samples. The calibration dataset is therefore
organized into $N_{\mathrm{path}}$ ordered loading--unloading paths. Each path
$p=1,\ldots,N_{\mathrm{path}}$ is represented as
\begin{equation}
  \mathcal{P}_p
  =
  \left\{
    \left(
      \boldsymbol{\delta}^{(m)}_p,
      \boldsymbol{t}^{(m)}_p
    \right)
  \right\}_{m=0}^{N_p},
  \label{eq:loading_path_data}
\end{equation}
and the complete dataset is
\begin{equation}
  \mathcal{S}
  =
  \left\{
    \mathcal{P}_p
  \right\}_{p=1}^{N_{\mathrm{path}}}.
\end{equation}
Each path is assumed to start from a pristine interface,
$d_p^{(0)}=0$. The damage states are not treated as independent calibration
variables; they are reconstructed sequentially from the separation history
using the constitutive update. Consequently, the predicted traction
$\widehat{\boldsymbol{t}}_p^{(m)}$ depends on the preceding loading history.

Let
$\boldsymbol{\theta}
\coloneqq
\left(
  \boldsymbol{\theta}_{\psi},
  \boldsymbol{\theta}_{R}
\right)$
collect the parameters of the active potential and inverse resistance. They
are identified by minimizing
\begin{equation}
  \mathcal{L}(\boldsymbol{\theta})
  =
  \sum_{p=1}^{N_{\text{path}}}
    \frac{1}{N_p}
    \sum_{m=1}^{N_p}
    \frac{
      |
        \widehat{\boldsymbol{t}}_p^{(m)}
        -
        \boldsymbol{t}_p^{(m)}
      |_2^2
    }{
      (t_p^{\max})^2
    },
  \label{eq:loss}
\end{equation}
where
$t_p^{\max}=\max_m\|\boldsymbol{t}_p^{(m)}\|_{\infty}$ and the expectation is
taken uniformly over the loading paths. In the present implementation, each
optimization step evaluates all paths over their complete ordered histories
before backpropagation. The loss is minimized using the Adam optimizer.

\begin{remark}
The decomposition of the constitutive response into the active potential
$\psi^{e}_{+}$ and the damage resistance $r$ is generally non-unique:
distinct $(\psi^{e}_{+},r)$ pairs may reproduce the same traction histories,
as illustrated in \cref{app:mixed-mode-result}. Consequently,
traction--separation data do not, in general, uniquely identify the two latent
constitutive functions individually.
\end{remark}

\section{Numerical Examples}
\label{sec:examples}

This section presents numerical examples that assess the representational and
identification properties of the proposed cohesive formulation. The first
example considers Mode~I loading and examines whether the formulation can
represent cohesive laws with sharp changes in the traction--separation
response. It further tests the identified model under several successive
unload--reload cycles with nonlinear unloading behavior to assess irreversible
damage evolution and non-negative dissipation. The second example considers
mixed-mode identification when calibration data are available only for the two
pure modes and investigates how the predicted intermediate response depends on
the constitutive structure imposed on the damage resistance. It also examines
the sensitivity of the identification and resulting predictions to network
size and random initialization. The third example considers several classical
mixed-mode cohesive laws with distinct analytical forms and evaluates
prediction at held-out mode mixities. The identified model is further tested
under non-proportional loading to examine its response when the mode mixity
evolves along the loading path.

All neural-network models and training procedures are implemented in PyTorch
\citep{paszke2019pytorch}. The network architectures, optimization
hyperparameters, and implementation details are summarized in
\cref{app:hyperparams}.

\subsection{Pure Mode~I Identification}
\label{sec:ex-mode-I}

This example considers several synthetic Mode~I traction--separation responses
with different loading and unloading behaviors to assess the flexibility and
thermodynamic consistency of the proposed formulation. Shear deformation is
not considered, so the problem reduces to a one-dimensional cohesive response.

We first synthesize data from two classical Mode~I cohesive laws --- a trapezoidal law with a constant-strength plateau and an exponential law with a long softening tail --- both exhibiting a linear increase in traction up to the peak. The plateau and extended softening tail make these responses challenging to represent accurately using fixed-form classical models, including versatile families such as PPR.
The trained model, as shown in \cref{fig:flex}, reproduces the target responses closely. The sharp corners and the smooth tails are automatically captured by the learned inverse resistance combined with the ICNN elastic potential.

\renewcommand{\tempvar}{0.35}
\begin{figure}[htbp]
  \centering
  \begin{subfigure}[b]{\tempvar\linewidth}\centering
    \includegraphics[width=\linewidth]{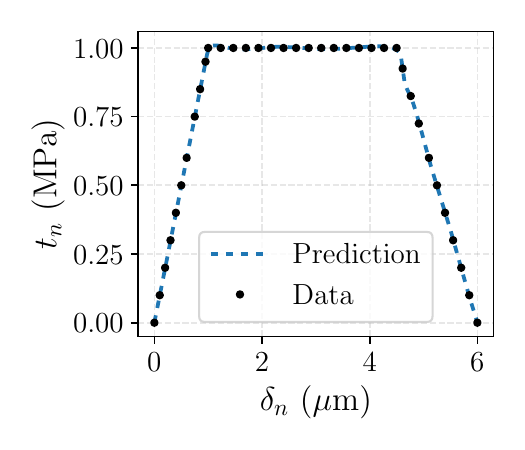}
    \caption{Trapezoidal}\end{subfigure}%
  \hspace{0.03\linewidth}%
  \begin{subfigure}[b]{\tempvar\linewidth}\centering
    \includegraphics[width=\linewidth]{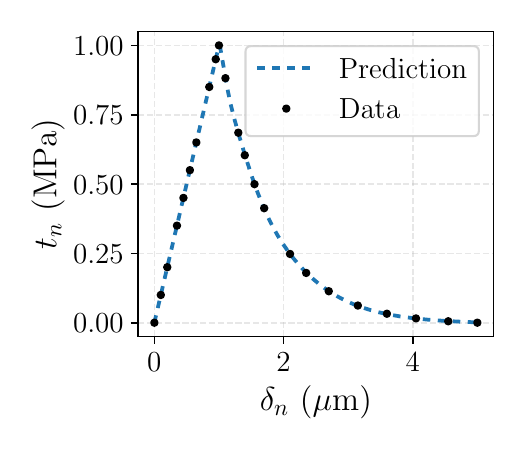}
    \caption{Exponential}\end{subfigure}
    \caption{Mode~I traction--separation responses for (a) trapezoidal and (b) exponential-tail cohesive laws.}
  \label{fig:flex}
\end{figure}

We further consider a nonlinear loading envelope with a concave-to-convex
\textit{S}-shaped unloading response. The model is trained on the monotonic
loading envelope, two pre-peak unload--reload cycles that retrace the loading
response, and one post-peak unload--reload cycle. The training and evaluation
separation trajectories
are shown in \cref{fig:dissipation_multiunload}(a), with the
corresponding traction--separation responses in panel~(b). The evaluation
history contains five unload--reload cycles with progressively increasing
maximum separation, each unloaded to the origin.

The first two evaluation cycles originate from the pre-peak branch and retrace
the elastic loading response. The remaining three originate from the post-peak
softening branch and exhibit nonlinear \textit{S}-shaped unloading responses.
Since only one post-peak cycle is included in the training data, the other
post-peak branches evaluate the response predicted by the identified model at
different damage states.

The cumulative mechanical dissipation is obtained from the intrinsic
dissipation in \cref{eq:dissipation} as
\begin{equation}
  D(t)
  =
  \int_0^t
  Y(\tau)\,\dot d(\tau)\,\dd\tau .
  \label{eq:cum_dissipation}
\end{equation}
As shown in \cref{fig:dissipation_D}, both $d$ and $D$ remain constant during
unloading and reloading below the previously attained maximum separation.
They increase only when the loading path reaches a new maximum separation.
Thus, the learned response satisfies damage irreversibility and non-negative
mechanical dissipation throughout the loading history.

\begin{figure}[htbp]
  \centering
  \begin{subfigure}[b]{0.30\linewidth}
    \centering
    \includegraphics[width=\linewidth]{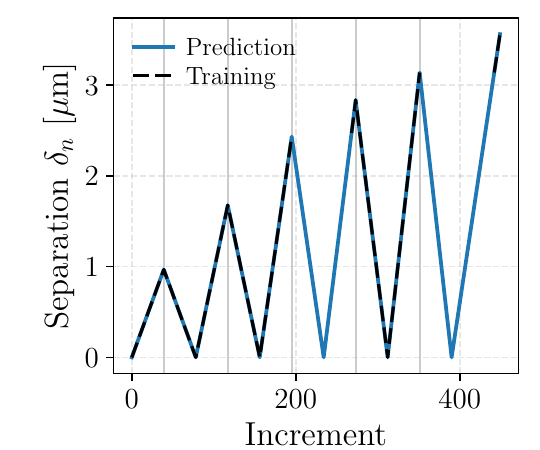}
    \caption{}
    \label{fig:dissipation_loading}
  \end{subfigure}
  \hfill
  \begin{subfigure}[b]{0.30\linewidth}
    \centering
    \includegraphics[width=\linewidth]{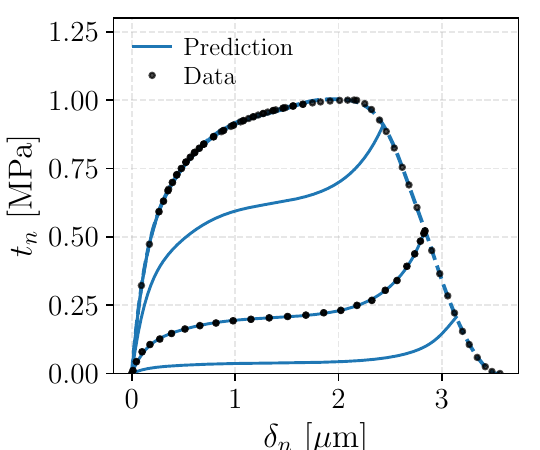}
    \caption{}
    \label{fig:multiunload_response}
  \end{subfigure}
  \hfill
  \begin{subfigure}[b]{0.3542\linewidth} 
    \centering
    \includegraphics[width=\linewidth]{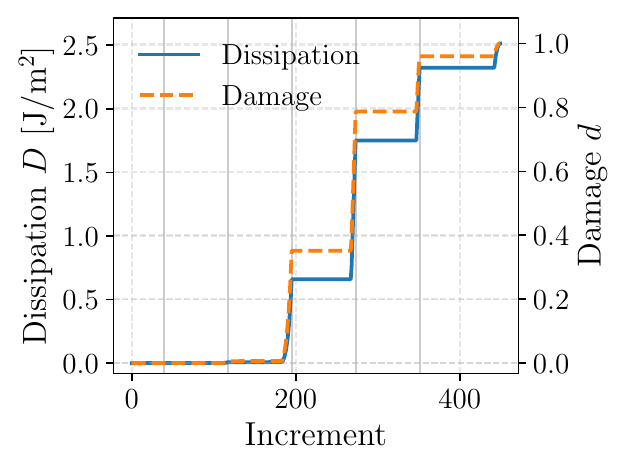}
    \caption{}
    \label{fig:dissipation_D}
  \end{subfigure}

    \caption{Mode~I response under five unload--reload cycles followed by complete separation. (a) Prescribed evaluation history (solid blue) together with the loading segments used for training (dashed black). (b) Predicted traction--separation response (solid blue) compared with the training data (black dots). (c) Cumulative mechanical dissipation $D$ (solid blue, left axis) and damage $d$ (dashed orange, right axis) predicted by the learned model.}
  \label{fig:dissipation_multiunload}
\end{figure}

\subsection{Mixed-Mode Identification from Pure-Mode Data}
\label{sec:examp-MNN-MRQS}

This example examines whether the proposed formulation can identify a mixed-mode cohesive response when calibration data are available only for the two pure modes. We compare the MNN and full-MNN parametrizations of the inverse resistance in terms of pure-mode accuracy, sensitivity to network capacity and initialization, and interpolation across unobserved mode mixities. Both models are trained on a synthetic irregular Mode~I response and a half-sine Mode~II response, with no intermediate mixed-mode data included during calibration. The two pure-mode responses are intentionally chosen to be markedly different, providing a nontrivial test of mixed-mode interpolation.

For each inverse-resistance architecture, we consider three network sizes and
train each configuration from five random initializations. In every case, the
active-potential ICNN and the inverse-resistance network are trained jointly
using the combined Mode~I and Mode~II loading data. The ICNN architecture is
kept identical across all configurations so that the comparison isolates the
effect of the MNN and full-MNN parametrizations of the inverse resistance. For corresponding hidden-layer widths, the full-MNN contains fewer parameters
because the unrestricted streams are omitted. The resulting pure-mode
predictions are compared below.

The resulting Mode~I predictions are shown in
\cref{fig:mnn_mrqs_mm_modeI_ts}. At the smaller network capacities, the
predictions exhibit noticeable sensitivity to initialization, and some
realizations do not reproduce the irregular features of the target response.
Increasing the resistance-network capacity reduces the variability across
initializations and improves agreement with the data. At the largest
configuration, the realizations for both the MNN and full-MNN nearly collapse
onto the target Mode~I response.

\begin{figure}[htbp]
  \centering
  \begin{subfigure}[b]{\textwidth}
    \centering
    \includegraphics[width=0.32\linewidth]{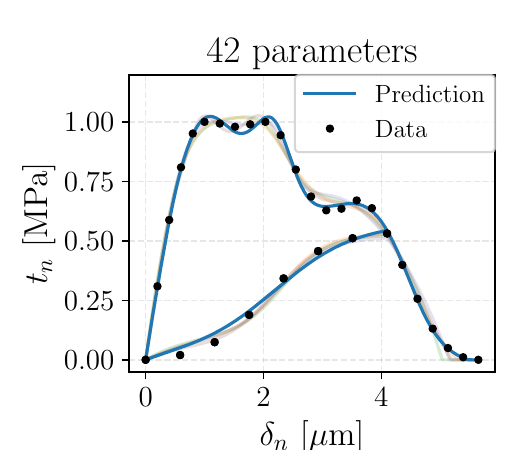}\hfill
    \includegraphics[width=0.32\linewidth]{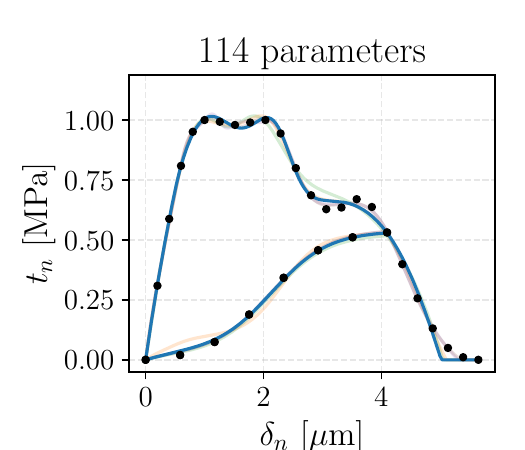}\hfill
    \includegraphics[width=0.32\linewidth]{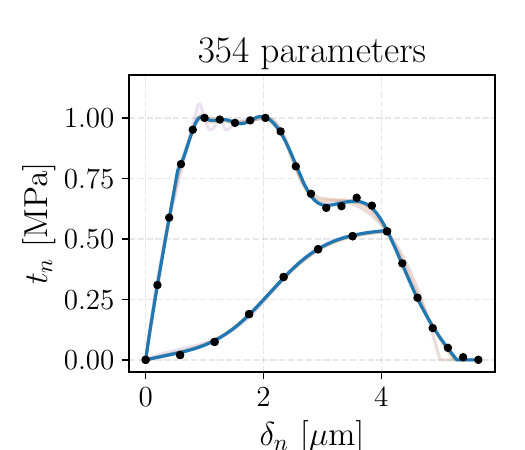}
    \caption{MNN}
    \label{fig:mnn_mrqs_mm_modeI_mnn}
  \end{subfigure}\\[1ex]
  \begin{subfigure}[b]{\textwidth}
    \centering
    \includegraphics[width=0.32\linewidth]{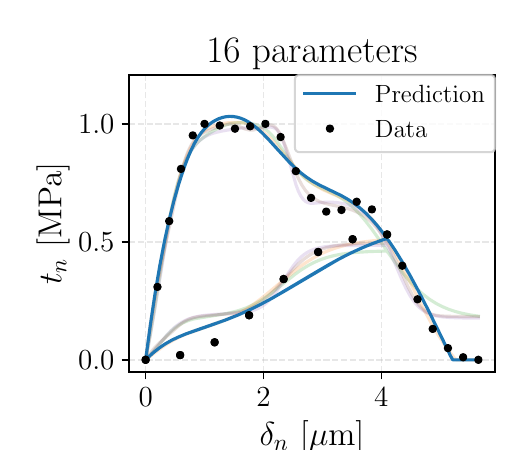}\hfill
    \includegraphics[width=0.32\linewidth]{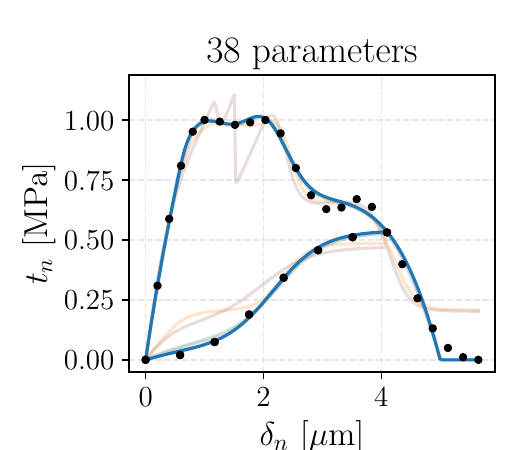}\hfill
    \includegraphics[width=0.32\linewidth]{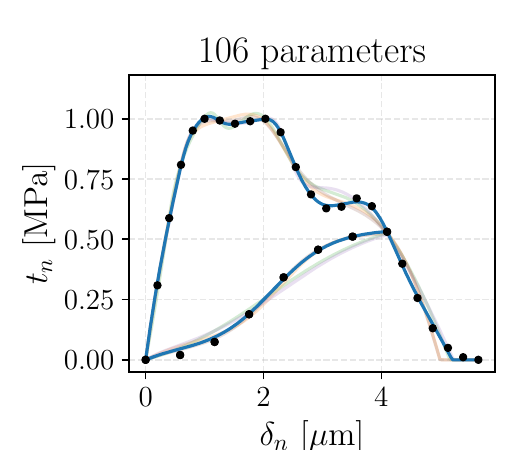}
    \caption{full-MNN}
    \label{fig:mnn_mrqs_mm_modeI_fmnn}
  \end{subfigure}
  \caption{Mode~I traction--separation predictions obtained with (a) MNN and
  (b) full-MNN inverse-resistance parametrizations. Three resistance-network
  sizes are considered, with five random initializations for each size. Panel
  titles give the number of resistance-network parameters. The solid blue
  curve denotes the lowest-loss realization, the faint curves denote the
  remaining initializations, and the black dots denote the training data.}
  \label{fig:mnn_mrqs_mm_modeI_ts}
\end{figure}

The corresponding Mode~II results are shown in
\cref{fig:mnn_mrqs_mm_modeII_ts}. The smaller configurations again exhibit
substantial sensitivity to initialization, with several realizations producing
poor or irregular responses. This variability decreases markedly as the
network capacity increases. At the largest configuration, both architectures
closely reproduce the Mode~II data with little dependence on initialization.
For comparable hidden-layer widths, the full-MNN uses substantially fewer
resistance-network parameters than the MNN; at the largest configuration, it
uses 106 parameters compared with 354 for the MNN while attaining comparable
pure-mode accuracy and reproducibility. 

\begin{figure}[htbp]
  \centering
  \begin{subfigure}[b]{\textwidth}
    \centering
    \includegraphics[width=0.32\linewidth]{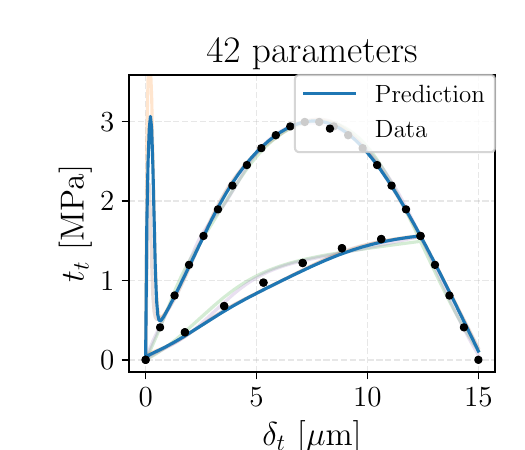}\hfill
    \includegraphics[width=0.32\linewidth]{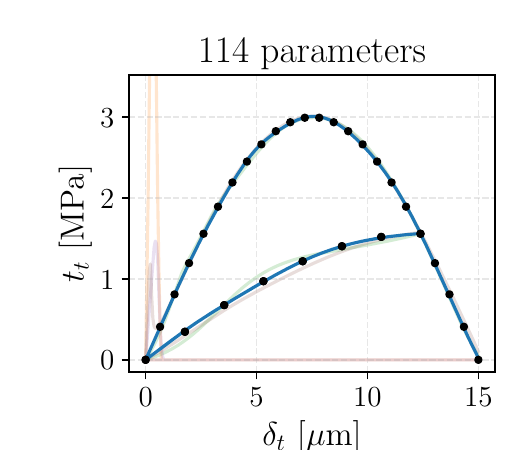}\hfill
    \includegraphics[width=0.32\linewidth]{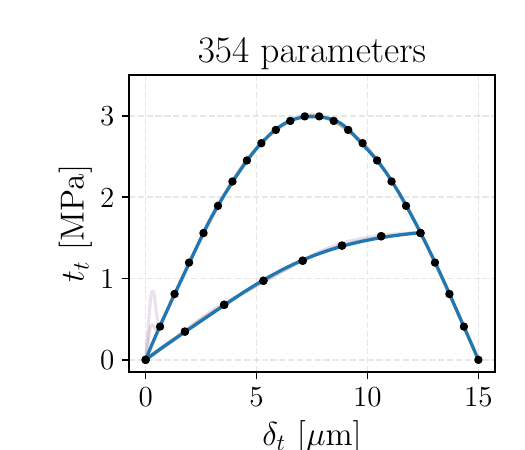}
    \caption{MNN}
    \label{fig:mnn_mrqs_mm_modeII_mnn}
  \end{subfigure}\\[1ex]
  \begin{subfigure}[b]{\textwidth}
    \centering
    \includegraphics[width=0.32\linewidth]{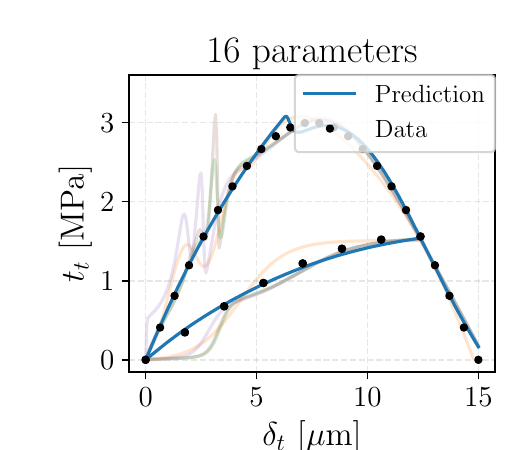}\hfill
    \includegraphics[width=0.32\linewidth]{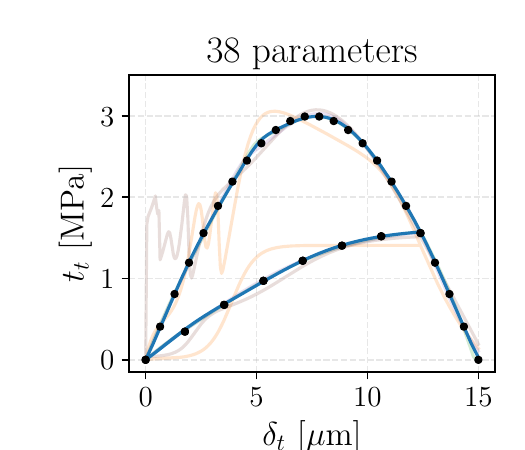}\hfill
    \includegraphics[width=0.32\linewidth]{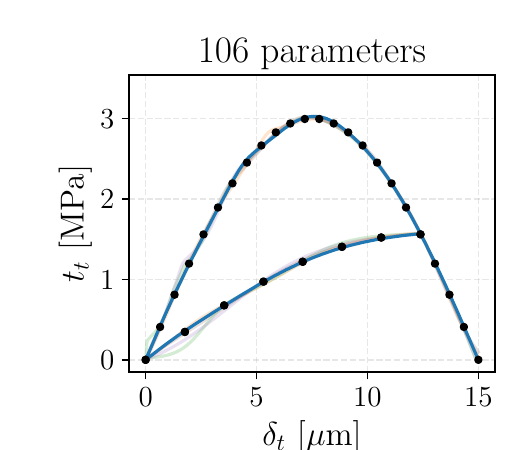}
    \caption{full-MNN}
    \label{fig:mnn_mrqs_mm_modeII_fmnn}
  \end{subfigure}
  \caption{Mode~II traction--separation predictions obtained with (a) MNN and
  (b) full-MNN inverse-resistance parametrizations. Three resistance-network
  sizes are considered, with five random initializations for each size. Panel
  titles give the number of resistance-network parameters. The solid blue
  curve denotes the lowest-loss realization, the faint curves denote the
  remaining initializations, and the black dots denote the training data.}
  \label{fig:mnn_mrqs_mm_modeII_ts}
\end{figure}

The more consequential difference between the two parametrizations appears at
intermediate mode mixities, which are not represented in the calibration data.
We evaluate the highest-capacity models under proportional mixed-mode loading
for eight values of $\beta\in[0,1]$. The traction and separation magnitudes are
defined as $\lVert\boldsymbol{t}\rVert=\sqrt{t_n^2+t_t^2}$ and
$\lVert\boldsymbol{\delta}\rVert=\sqrt{\delta_n^2+\delta_t^2}$, respectively.
The predicted traction--separation responses and corresponding fracture
energies are shown in \cref{fig:mixed_mode_beta_sweep_ts}.

Both models recover the prescribed pure-mode responses at $\beta=0$ and
$\beta=1$, but their predictions between these limits differ substantially.
Because the dependence of the MNN on $\beta$ is unrestricted, the intermediate
responses are weakly constrained by the pure-mode data alone. Consequently,
the predicted traction--separation curves may vary irregularly and
non-monotonically with mode mixity, as also reflected in the non-monotone
$G_c(\beta)$. These responses remain thermodynamically admissible under the
constitutive assumptions of the model, but thermodynamic admissibility alone
does not ensure that an interpolation is physically representative where no
data are available. The result therefore illustrates both the flexibility of
the MNN and the limited identifiability of mixed-mode behavior from pure-mode
data alone.

The full-MNN introduces the additional assumption that the resistance
increases with $\beta$. This inductive bias constrains the otherwise
underdetermined interpolation and guarantees a non-decreasing fracture energy,
but it represents prior constitutive information rather than behavior inferred
from intermediate-mixity data. Reliable identification of the mixed-mode
response therefore requires either suitable prior structure or calibration
data spanning the relevant range of mode mixities. The latter setting is
examined in the next example, where data at several intermediate mixities are
included in the calibration and predictions are evaluated at held-out
mixities. Additional results for the present example, including the learned
constitutive functions and training histories, are provided in
\cref{app:mixed-mode-result}.

\begin{figure}[htbp]
  \centering
  \begin{subfigure}[b]{\textwidth}
    \centering
    \includegraphics[width=0.366\linewidth]{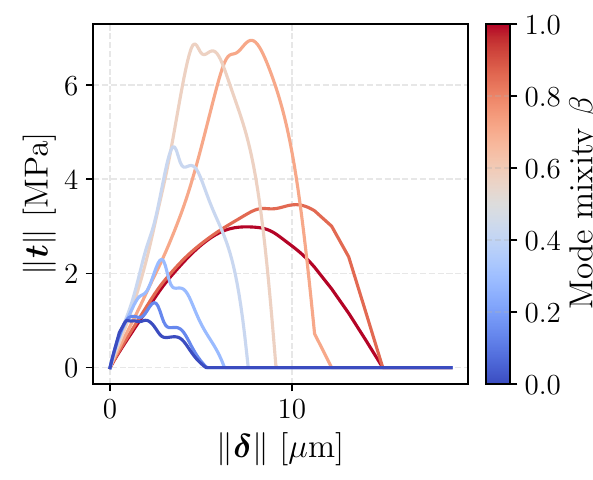}
    \hspace{0.04\linewidth}
    \includegraphics[width=0.30\linewidth]{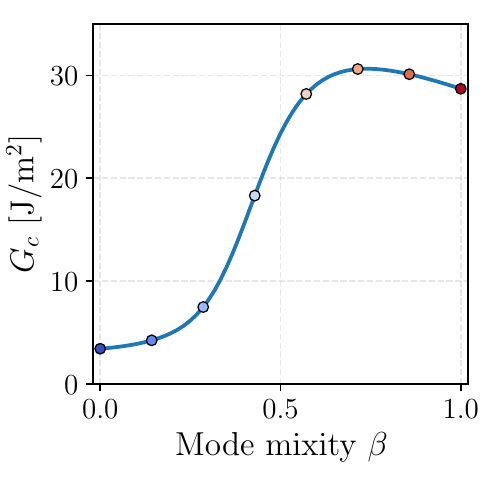}
    \caption{MNN}
    \label{fig:beta_sweep_mnn}
  \end{subfigure}\\[1ex]
  \begin{subfigure}[b]{\textwidth}
    \centering
    \includegraphics[width=0.366\linewidth]{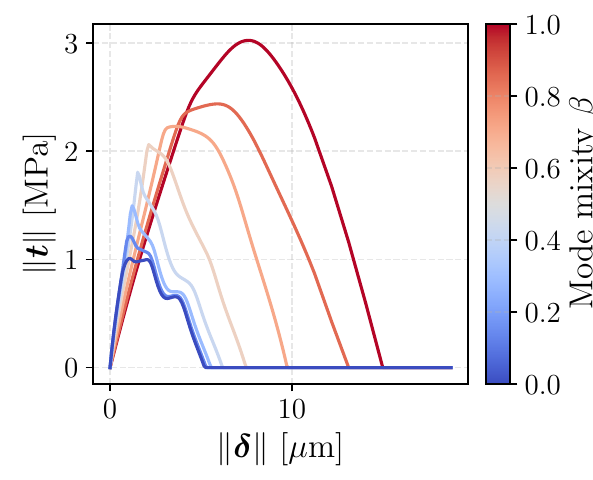}
    \hspace{0.04\linewidth}
    \includegraphics[width=0.30\linewidth]{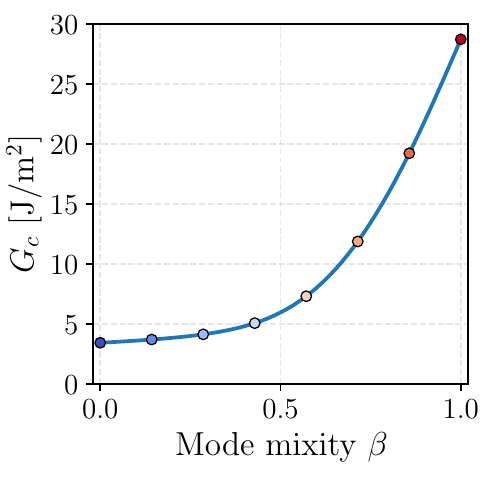}
    \caption{full-MNN}
    \label{fig:beta_sweep_fmnn}
  \end{subfigure}
  \caption{Mixed-mode responses predicted by (a) the MNN and (b) the full-MNN
  after calibration using only pure Mode~I and Mode~II data. The left column
  shows traction magnitude $\lVert\boldsymbol{t}\rVert$ versus separation
  magnitude $\lVert\boldsymbol{\delta}\rVert$ for eight mode mixities
  $\beta\in[0,1]$. The right column shows the corresponding fracture energy
  $G_c(\beta)$. }
  \label{fig:mixed_mode_beta_sweep_ts}
\end{figure}

\subsection{Mixed-Mode Identification Across Multiple Mode Mixities}
\label{sec:examp-dimitri}

This example evaluates the proposed formulation for mixed-mode cohesive laws
using calibration data from multiple mode mixities. The models are calibrated
at selected mode mixities and evaluated at held-out intermediate mixities to
assess interpolation across the mode-mixity range. We further examine one
identified model under non-proportional loading, for which the mode mixity
evolves along the loading path, to assess the resulting traction and
dissipation histories beyond the proportional trajectories used for
calibration.

Data are synthesized from three mixed-mode cohesive laws: an exponential law
(CZM-Exp) \citep{van2006improved}, a bilinear law with a power-law mixed-mode
criterion (CZM-BL) \citep{camanho2003numerical}, and the polynomial potential
PPR law \citep{Park2009}. The corresponding traction--separation relations and
parameter values are summarized in \cref{app:dimitri_laws}. These laws provide three distinct response classes: smooth exponential softening, a bilinear law
with a sharp strength transition, and a mode-dependent polynomial response. Because these models are based on distinct analytical cohesive constructions, they provide a direct test of whether the proposed framework can represent different cohesive-law families within the same constitutive structure.

For each law, data are generated along thirteen evenly spaced mode mixities
$\beta$. Ten mode mixities are used for calibration and three are withheld for
testing. Each configuration is trained from four random initializations for
CZM-Exp and CZM-BL and from eight initializations for PPR; representative
results are reported. 
The inverse resistance is represented by the MNN.

\Cref{fig:dimitri_czm-exp} show the normal
traction, tangential traction, and fracture energy for the three cohesive laws.
The identified models closely reproduce the calibration responses and
accurately predict the held-out mode mixities, as quantified in
\cref{tab:dimitri_mse}.

\begin{figure}[htbp]
  \centering
  \begin{subfigure}[b]{\textwidth}\centering
    \includegraphics[width=0.31\linewidth]{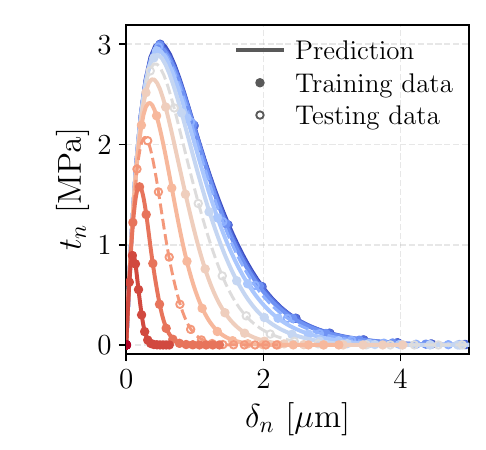}\hfill
    \includegraphics[width=0.37\linewidth]{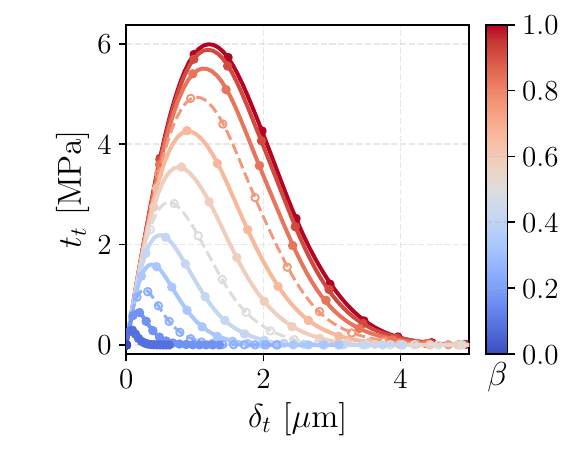}\hfill
    \includegraphics[width=0.31\linewidth]{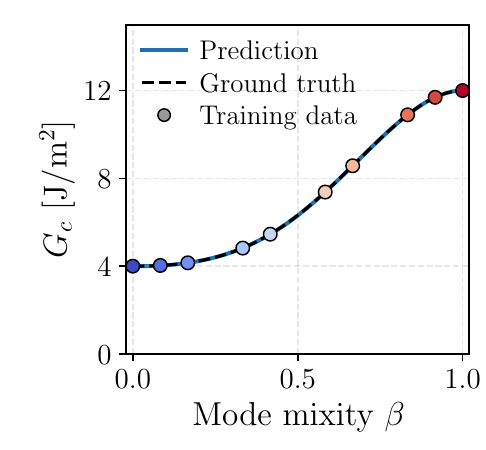}
    \caption{CZM-Exp Data}\label{fig:dimitri_czm-exp_mnn}
  \end{subfigure}\\[1ex]
  \begin{subfigure}[b]{\textwidth}\centering
    \includegraphics[width=0.31\linewidth]{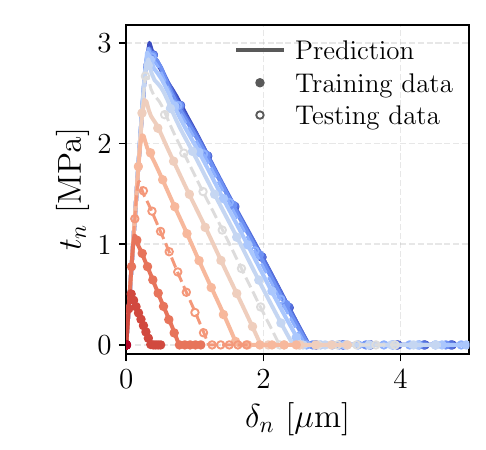}\hfill
    \includegraphics[width=0.37\linewidth]{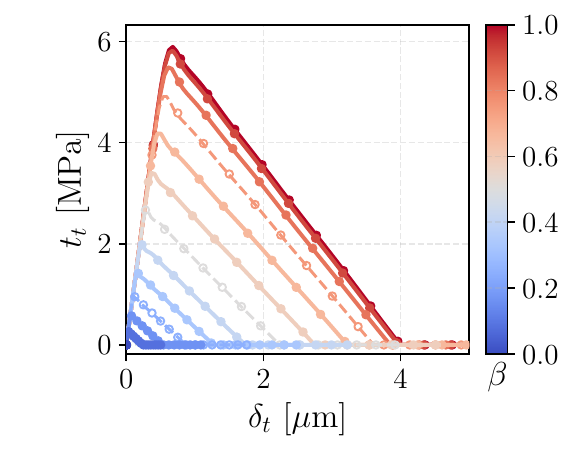}\hfill
    \includegraphics[width=0.31\linewidth]{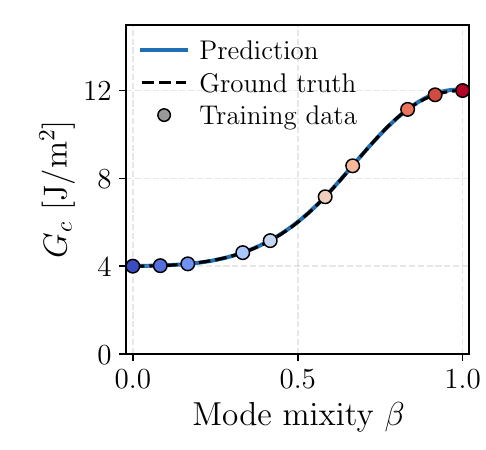}
    \caption{CZM-BL Data}\label{fig:dimitri_czm-bl_mnn}
  \end{subfigure} \\[1ex]
  \begin{subfigure}[b]{\textwidth}\centering
    \includegraphics[width=0.31\linewidth]{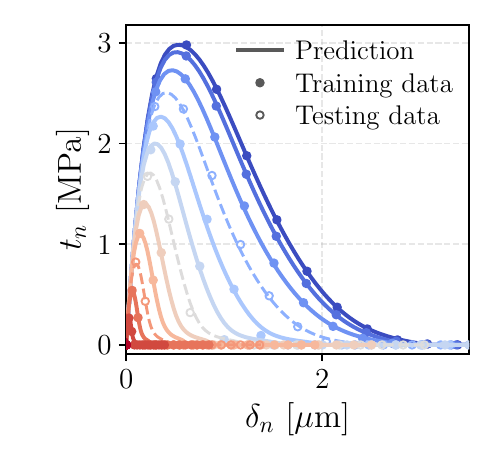}\hfill
    \includegraphics[width=0.37\linewidth]{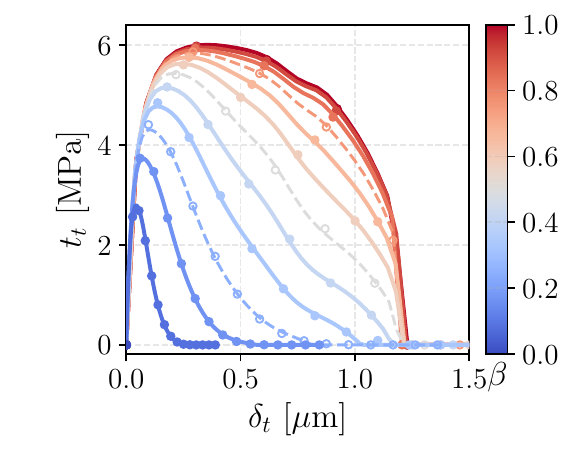}\hfill
    \includegraphics[width=0.31\linewidth]{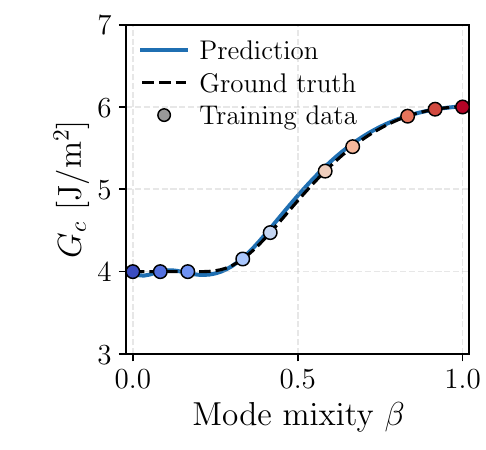}
    \caption{ PPR Data}\label{fig:dimitri_ppr_mnn}
  \end{subfigure}
    \caption{Mixed-mode identification results for (a) CZM-Exp, (b) CZM-BL, and
    (c) PPR. The left and center columns show the normal and tangential
    traction--separation responses, respectively, with color indicating the mode
    mixity $\beta$. Filled markers denote calibration data and open markers denote
    the held-out mixities $\beta\in\{0.25,0.5,0.75\}$; the corresponding model
    predictions are shown by the curves. The right column compares the predicted
    fracture energy $G_c(\beta)$ with the analytical reference.}
  \label{fig:dimitri_czm-exp}
\end{figure}

\begin{table}[htbp]
\centering
\small
\caption{Training and held-out test errors for the three models shown in \cref{fig:dimitri_czm-exp}. Each entry is the normalized MSE averaged over the $10$ training mixities or the $3$ held-out test mixities $\beta\in\{0.25,0.5,0.75\}$.}
\label{tab:dimitri_mse}
\begin{tabular}{@{}lccc@{}}
\toprule
 & CZM-Exp & CZM-BL & PPR \\
\midrule
Train & $2.9\times10^{-6}$ & $7.4\times10^{-6}$ & $3.2\times10^{-5}$ \\
Test  & $4.7\times10^{-6}$ & $8.6\times10^{-6}$ & $5.2\times10^{-5}$ \\
\bottomrule
\end{tabular}
\end{table}

The trajectories considered above are proportional, so the mode mixity $\beta$
remains constant along each loading path. We further examine the model
identified from CZM-Exp under non-proportional loading, although no
non-proportional trajectories are included in the calibration data.
\Cref{fig:dimitri_nonprop} considers two sequential paths in the
$\delta_n$--$\delta_t$ separation plane. The N$\to$T path first opens the
interface in Mode~I to $\delta_n=2\,\mu\mathrm{m}$ and then increases
$\delta_t$ at fixed $\delta_n$. The T$\to$N path applies the same two loading
stages in the opposite order. Both paths terminate at
$(\delta_n,\delta_t)=(2,2)\,\mu\mathrm{m}$. The responses are plotted against
the arc length $s=\int\lVert\dd\boldsymbol{\delta}\rVert$ traversed along each
path.

The traction histories depend strongly on the loading sequence. Along
N$\to$T, the first stage corresponds to Mode~I loading, with $t_t=0$ while
$t_n$ rises and subsequently decreases. During the second stage, the
tangential traction develops as $\delta_t$ increases, while the normal traction
continues to decrease despite the fixed $\delta_n$. The T$\to$N path exhibits
the corresponding interaction when normal opening is applied after the initial
shearing stage. These changes reflect the combined effects of mixed-mode
coupling and damage evolution. In particular, because a single scalar damage
variable degrades the entire active potential, damage generated by either
loading component contributes to degradation of both active traction
components.

Despite their different histories, the two paths reach the same final damage
in the present test. Because they also terminate at the same separation, the
constitutive state $(\boldsymbol{\delta},d)$ is identical at the endpoint, and
the predicted terminal traction vectors coincide. The cumulative dissipation,
however, differs substantially between the two paths, as shown in
\cref{fig:dimitri_nonprop_D}. From \cref{eq:cum_dissipation},
$D=\int Y\,\dd d$, so the dissipated energy depends on the values of the
driving force at which the successive damage increments occur. Since
$Y=r(d,\beta)$ during active damage growth, changing the order of normal and
tangential loading changes the sequence of mode mixities and driving forces
encountered during damage evolution. Consequently, the same final
$(\boldsymbol{\delta},d)$ can be reached with different cumulative
dissipation. Since both paths attain the same final damage, the larger value of
$D$ for T$\to$N indicates that its damage increments occur at a larger
damage-weighted average driving force.

\begin{remark}
\label[remark]{rem:path-dependence}
Within the present scalar-damage formulation, a mode-dependent resistance
generally leads to path-dependent cumulative dissipation under
non-proportional loading. Suppose instead that the cumulative dissipation were
a state function, $D=\hat D(\boldsymbol{\delta},d)$. Variations of
$\boldsymbol{\delta}$ at fixed $d$ are non-dissipative, so
$\partial\hat D/\partial\boldsymbol{\delta}=0$ along such admissible
variations and $\hat D$ can depend only on $d$. During active damage growth,
\cref{eq:cum_dissipation} and the consistency condition give
\begin{equation*}
  \dd D
  =
  Y\,\dd d
  =
  r(d,\beta)\,\dd d,
\end{equation*}
and therefore
\begin{equation*}
  \hat D'(d)=r(d,\beta).
\end{equation*}
If the same damage state can be attained by proportional loading at different
mode mixities, the left-hand side is independent of $\beta$, so the resistance
must also be independent of $\beta$. It would then follow that
\begin{equation*}
  G_c(\beta)
  =
  \int_0^1 r(d)\,\dd d
\end{equation*}
is independent of mode mixity. Thus, within the constitutive structure adopted
here, a mode-dependent fracture energy represented through $r(d,\beta)$
generally entails path-dependent dissipation when $\beta$ evolves during
damage growth.
\end{remark}

\renewcommand{\tempvar}{0.25}
\begin{figure}[htbp]
  \centering
  \begin{subfigure}[b]{\tempvar\linewidth}\centering
    \includegraphics[width=\linewidth]{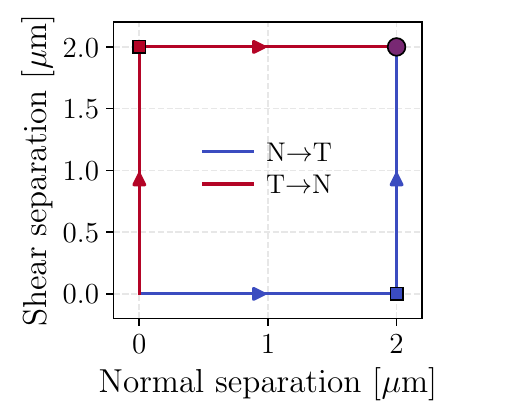}
    \caption{}\label{fig:dimitri_nonprop_paths}
  \end{subfigure}\hspace{0.05\linewidth}%
  \begin{subfigure}[b]{\tempvar\linewidth}\centering
    \includegraphics[width=\linewidth]{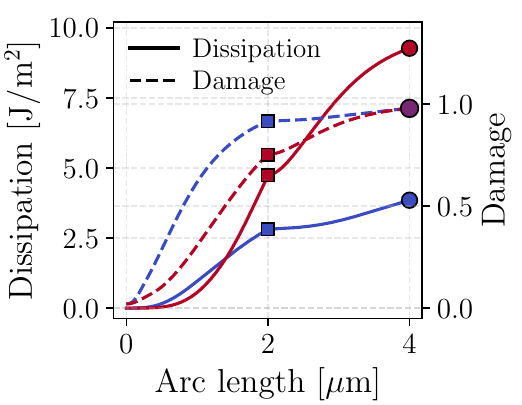}
    \caption{}\label{fig:dimitri_nonprop_D}
  \end{subfigure}\\[1ex]
  \begin{subfigure}[b]{\tempvar\linewidth}\centering
    \includegraphics[width=\linewidth]{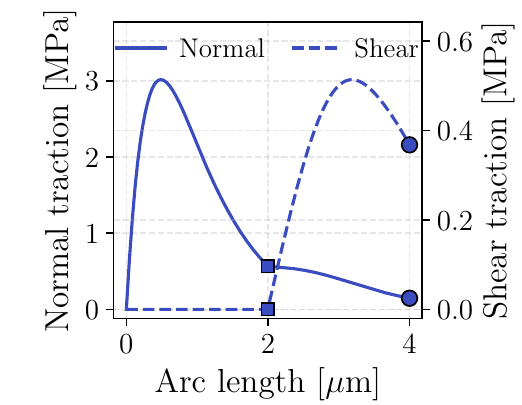}
    \caption{}\label{fig:dimitri_nonprop_tNT}
  \end{subfigure}\hspace{0.05\linewidth}%
  \begin{subfigure}[b]{\tempvar\linewidth}\centering
    \includegraphics[width=\linewidth]{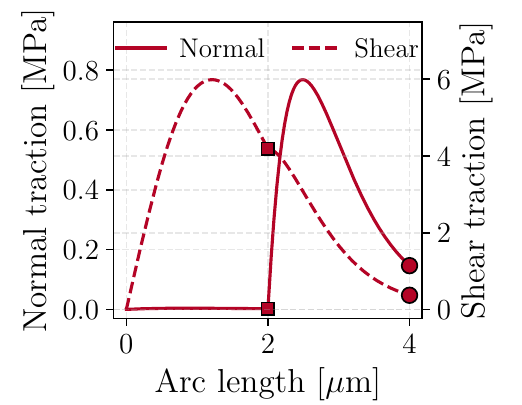}
    \caption{}\label{fig:dimitri_nonprop_tTN}
  \end{subfigure}
\caption{Non-proportional response of the model identified from CZM-Exp data
(\cref{fig:dimitri_czm-exp_mnn}). Two loading paths connect the origin to the
same final separation: N$\to$T (blue) first opens the interface in Mode~I and
then increases the tangential separation at fixed $\delta_n$, whereas
T$\to$N (red) applies the same stages in the opposite order.
(a) Loading paths in the separation plane.
(b) Cumulative mechanical dissipation (solid, left axis) and damage
(dashed, right axis).
(c,d) Normal (solid, left axis) and tangential (dashed, right axis) tractions
along the N$\to$T and T$\to$N paths, respectively. The horizontal axis in
panels (b)--(d) is the arc length along the loading path. The two paths
reach the same final separation, damage, and traction state but accumulate
different amounts of mechanical dissipation.}
  \label{fig:dimitri_nonprop}
\end{figure}

\section{Conclusions}

\label{sec:conclusions}

This work introduces a data-driven cohesive zone formulation within the
generalized standard material framework. Rather than prescribing the
traction--separation relation and unloading rules directly, the cohesive
response is defined through a learned active potential and a
mode-dependent damage resistance. The active potential is represented by an
input-convex neural network, while the inverse damage resistance is represented
by a monotone neural network. The required convexity, monotonicity,
normalization, and damage-irreversibility properties are incorporated directly
into the constitutive representation. Direct parametrization of the inverse
resistance also provides an explicit damage update and avoids a local nonlinear
inversion of the resistance relation during constitutive evaluation.

The material-point studies demonstrate that the formulation can represent a
broad range of loading, unloading, and softening responses while maintaining
irreversible damage evolution and non-negative dissipation. A single identified
model can accommodate distinct Mode~I and Mode~II responses as well as
mixed-mode behavior. When calibration data are restricted to the two pure
modes, the intermediate mixed-mode response remains underdetermined. The MNN
allows unrestricted dependence on mode mixity, whereas the full-MNN introduces
an additional inductive bias by constraining the resistance to increase with
mode mixity. When calibration data span several mode mixities, the formulation
accurately reproduces the calibration responses and predicts held-out
intermediate mixities for several classical mixed-mode cohesive laws.

Overall, the proposed formulation provides a common thermodynamic structure for learning cohesive responses, without prescribing a particular traction--separation law or phenomenological unloading relation.

\paragraph{Limitations and Future Work.}
Several modeling assumptions delimit the present scope. First, the
multiplicative degradation in \cref{eq:psi} implies that, at fixed damage, the
active unloading response is governed by the same potential
$\psi^{e}_{+}$. The present model therefore does not represent residual
separation, frictional or plastic hysteresis, damage-dependent changes in the
unloading shape, or fatigue degradation. Such behavior would require
additional internal variables or evolution mechanisms. Second, a single
scalar damage variable degrades the entire active potential and therefore
cannot represent independent degradation of the normal and tangential
responses. Extensions with multiple damage variables would permit more general
anisotropic degradation. In three dimensions, dependence on the tangential
separation only through $\lVert\boldsymbol{\delta}_t\rVert$ imposes in-plane
isotropy and does not distinguish Mode~II from Mode~III.
A further limitation concerns identifiability. Traction--separation histories
do not, in general, uniquely determine the active potential and damage
resistance individually, as illustrated in
\cref{app:mixed-mode-result}. Moreover, all identifications considered here
use synthetic, noise-free material-point data. The effects of measurement
noise, incomplete loading histories, and experimental uncertainty remain to
be investigated. 
The present formulation is restricted to small-deformation, isothermal, and
rate-independent behavior and is evaluated at the material-point level.
Natural extensions include finite-element implementation for homogeneous and
heterogeneous interfaces and inverse identification from structural-scale
measurements rather than directly observed traction--separation data.
Extensions to rate- and temperature-dependent behavior, cyclic degradation,
fatigue, and more general three-dimensional interface anisotropy are also of
interest.

\section*{Data Availability}
All data used in this work are either publicly available or can be reproduced from the information provided and are available from the corresponding author upon reasonable request.

\section*{Acknowledgment}
B.B. acknowledges support from the startup fund provided by Northwestern University. 

\section*{Declaration of generative AI and AI-assisted technologies in the manuscript preparation process}
During the preparation of this work the authors used ChatGPT and Claude in order to assist with language improvement. After using the services, the authors reviewed and edited the content as needed and take full responsibility for the content of the published article.

\bibliographystyle{plainnat}
\bibliography{bibliography}


\appendix
\crefalias{section}{appendix}

\section{Elements of Convex Analysis}
\label{app:convex}

For completeness we recall, informally, the convex-analysis notions used in the
constitutive construction; rigorous statements may be found in standard
references \cite{rockafellar1997convex}. Throughout,
$f:\R^{m}\to\R\cup\{+\infty\}$ is an extended-real-valued function and
$C\subseteq\R^{m}$ a convex set.

\paragraph{Convex function.}$f$ is convex if $f(\lambda x+(1-\lambda)y)\le\lambda f(x)+(1-\lambda)f(y)$ for all $x,y$ and $\lambda\in[0,1]$; equivalently, its epigraph $\{(x,\alpha):\alpha\ge f(x)\}$ is a convex set. Convexity underlies both the stability of the stored energy and the non-negativity of the dissipation.

\paragraph{Composition rule.}If $c_1,\dots,c_m$ are convex and
$g$ is jointly convex and non-decreasing in every argument whose $c_i$ is not
affine, then $x\mapsto g\bigl(c_1(x),\dots,c_m(x)\bigr)$ is convex
\citep[Section~3.2.4]{boyd2004convex}.

\paragraph{Lower semicontinuity.} $f$ is lower semicontinuous if $\liminf_{y\to x}f(y)\ge f(x)$ at every $x$, equivalently if all sublevel sets $\{f\le\alpha\}$ are closed. Intuitively the function may jump \emph{up} but not \emph{down} in a limit; this mild requirement ensures that minima are attained and that a potential taking the value $+\infty$ on a constraint (as in \cref{eq:phi-reduced}) is admissible.

\paragraph{Proper and closed convex function.} A convex $f$ is proper if it is finite for at least one argument and never equals $-\infty$, and closed if, in addition, it is lower semicontinuous (its epigraph is closed). These are the functions for which the notions below are well behaved.

\paragraph{Subdifferential.} For a convex $f$, the subdifferential at $x$ is the set of slopes of supporting hyperplanes,
    \begin{equation*}
      \subdiff f(x)=\bigl\{\,g:\ f(y)\ge f(x)+g\cdot(y-x)\ \ \forall y\,\bigr\}.
    \end{equation*}
It generalizes the gradient to non-smooth points: where $f$ is
differentiable it reduces to $\{\nabla f(x)\}$, whereas at a kink it is a
whole set. The evolution law $Y\in\subdiff_{\dot d}\phi$ is stated with
this notion.

\paragraph{Positive homogeneity of degree one.} $f$ is positively homogeneous of degree one if $f(cx)=c\,f(x)$ for every $c>0$.

\paragraph{Indicator function.} The indicator of $C$ is $I_C(x)=0$ for $x\in C$ and $+\infty$ otherwise. It is convex, proper, and closed exactly when $C$ is convex and closed, and encodes a constraint as an infinite energy penalty --- as with the irreversibility constraint $\dot d\ge0$.

\paragraph{Normal cone.}
For $x\in C$, the normal cone is $N_C(x)=\{\,g:\ g\cdot(y-x)\le0\ \ \forall y\in C\,\}$, the set of outward directions at $x$. It equals $\{0\}$ in the interior and widens on the boundary; it is precisely the subdifferential of the indicator function of $C$.

\section{Convexity Conditions for Composed Cohesive Potentials}
\label{app:composed-convexity}

Consider a composed function $F(\boldsymbol{x})
  =
  f\bigl(\boldsymbol{y}(\boldsymbol{x})\bigr),$
where $\boldsymbol{y}$ is the inner map and $f$ is the outer function. We
first recall the affine case.

\begin{lemma}[affine map]
\label[lemma]{lem:affine-composition}
Let $\boldsymbol{y}(\boldsymbol{x})
  =
  \boldsymbol{A}\boldsymbol{x}+\boldsymbol{b},$
and let $f$ be convex on the range of $\boldsymbol{y}$. Then
$F=f\circ\boldsymbol{y}$ is convex in $\boldsymbol{x}$. If
$\boldsymbol{A}$ is invertible, the converse also holds.
\end{lemma}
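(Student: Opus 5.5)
The forward direction follows directly from the definition of convexity in \cref{app:convex}, using the fact that an affine map preserves convex combinations. Fix $\boldsymbol{x}_1,\boldsymbol{x}_2$ and $\lambda\in[0,1]$, and write $\boldsymbol{y}_i=\boldsymbol{A}\boldsymbol{x}_i+\boldsymbol{b}$. Because the two weights sum to one, the identity
\begin{equation*}
  \boldsymbol{y}\bigl(\lambda\boldsymbol{x}_1+(1-\lambda)\boldsymbol{x}_2\bigr)
  =
  \lambda\boldsymbol{y}_1+(1-\lambda)\boldsymbol{y}_2
\end{equation*}
holds: the offset $\boldsymbol{b}$ splits as $\lambda\boldsymbol{b}+(1-\lambda)\boldsymbol{b}$. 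The range of $\boldsymbol{y}$ is an affine subspace, hence convex, so this convex combination lies in the set on which $f$ is convex. Applying the convexity inequality for $f$ then gives
\begin{equation*}
  F\bigl(\lambda\boldsymbol{x}_1+(1-\lambda)\boldsymbol{x}_2\bigr)
  =
  f\bigl(\lambda\boldsymbol{y}_1+(1-\lambda)\boldsymbol{y}_2\bigr)
  \le
  \lambda F(\boldsymbol{x}_1)+(1-\lambda)F(\boldsymbol{x}_2).
\end{equation*}
If $f$ takes the value $+\infty$, the inequality still holds under the usual extended-real conventions. Equivalently, the epigraph of $F$ is the preimage of the epigraph of $f$ under the affine map $(\boldsymbol{x},\alpha)\mapsto(\boldsymbol{A}\boldsymbol{x}+\boldsymbol{b},\alpha)$, and preimages of convex sets under affine maps are convex.

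For the converse, assume $\boldsymbol{A}$ is invertible. Then $\boldsymbol{y}$ is a bijection onto $\R^m$, with affine inverse $\boldsymbol{x}(\boldsymbol{y})=\boldsymbol{A}^{-1}(\boldsymbol{y}-\boldsymbol{b})$. This lets us write $f=F\circ\boldsymbol{x}$, and applying the forward direction with $F$ in place of $f$ and $\boldsymbol{A}^{-1}$ in place of $\boldsymbol{A}$ yields convexity of $f$.

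None of the steps is a real obstacle. The only point needing care is that \enquote{convex on the range of $\boldsymbol{y}$} is meaningful because that range is convex, so convex combinations of points in the range stay in the range. The remark after \cref{eq:tangent_subtraction} uses the lemma in a slightly different way: it adds an affine function to a convex one rather than composing with an affine map. That case follows from the same argument, since an affine function satisfies the convexity inequality with equality and the sum of convex functions is convex. I would note this in a single line so that the reference from \cref{sec:nn_representation:icnn} is fully justified.
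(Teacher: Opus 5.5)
Your proof is correct and follows the paper's argument. For the forward direction you use that an affine map preserves convex combinations, and for invertible $\boldsymbol{A}$ you write $f=F\circ\boldsymbol{y}^{-1}$ with an affine inverse, exactly as the paper does. Your side remark is also apt: the tangent correction in \cref{eq:tangent_subtraction} adds an affine function rather than precomposing with one, so it is justified by the sum-of-convex-functions argument you give rather than by this lemma as stated.
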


\begin{proof}
For any $\boldsymbol{x},\boldsymbol{z}$ and $\lambda\in[0,1]$,
\begin{align*}
  F\bigl(\lambda\boldsymbol{x}+(1-\lambda)\boldsymbol{z}\bigr)
  &=
  f\bigl(
    \lambda\boldsymbol{y}(\boldsymbol{x})
    +(1-\lambda)\boldsymbol{y}(\boldsymbol{z})
  \bigr)
  \\
  &\leq
  \lambda F(\boldsymbol{x})
  +(1-\lambda)F(\boldsymbol{z}),
\end{align*}
where the inequality follows from convexity of $f$. Thus, affine
precomposition preserves convexity. If $\boldsymbol{A}$ is invertible, then
\begin{equation*}
  f(\boldsymbol{y})
  =
  F\bigl(
    \boldsymbol{A}^{-1}(\boldsymbol{y}-\boldsymbol{b})
  \bigr),
\end{equation*}
so convexity of $F$ also implies convexity of $f$.
\end{proof}

Non-affine inner maps require an additional condition. We next consider the
case in which the positive-part operator acts on the first component,
\begin{equation*}
  \boldsymbol{y}(\boldsymbol{x})
  =
  \bigl(\pos{x_1},x_2,\ldots,x_m\bigr).
  \label{eq:partial-positive-map}
\end{equation*}

\begin{lemma}[positive-part component]
\label[lemma]{lem:composed-convexity}
Let $  F(\boldsymbol{x})
  =
  f\bigl(\pos{x_1},x_2,\ldots,x_m\bigr),$
where $f$ is defined on
$[0,\infty)\times\R^{m-1}$. Then $F$ is convex in
$\boldsymbol{x}$ if and only if $f$ is jointly convex and
non-decreasing in its first argument.
\end{lemma}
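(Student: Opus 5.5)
The proof splits into the two directions of the equivalence.

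\emph{Sufficiency.} Assume $f$ is jointly convex on $[0,\infty)\times\R^{m-1}$ and non-decreasing in its first argument. The inner map $\boldsymbol{y}(\boldsymbol{x})=(\pos{x_1},x_2,\ldots,x_m)$ has a first component $\pos{x_1}=\max(x_1,0)$, which is convex, while the remaining components are affine. The composition rule recalled in \cref{app:convex} then gives convexity of $F$ directly. That rule requires monotonicity only in arguments whose inner map is not affine, and here this is the first argument only.

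For a self-contained check, take $\boldsymbol{x},\boldsymbol{z}$ and $\lambda\in[0,1]$. Convexity of the positive part gives
\[
\pos{\lambda x_1+(1-\lambda)z_1}\le\lambda\pos{x_1}+(1-\lambda)\pos{z_1}.
\]
Both sides lie in $[0,\infty)$, so monotonicity of $f$ in its first argument lets us replace the first slot of $f$ by the larger value. Joint convexity of $f$ then finishes the estimate.

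\emph{Necessity, convexity of $f$.} Assume $F$ is convex. For $\boldsymbol{x}$ with $x_1\ge0$ we have $F(\boldsymbol{x})=f(\boldsymbol{x})$. Thus $f$ coincides with the restriction of $F$ to the convex half-space $\{x_1\ge0\}$, which is convex.

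\emph{Necessity, monotonicity of $f$.} This is the main step. Fix $(x_2,\ldots,x_m)$ and consider the one-variable function
\[
h(s)=F(s,x_2,\ldots,x_m),
\]
which is convex on $\R$. For $s\le0$ we have $h(s)=f(0,x_2,\ldots,x_m)$, so $h$ is constant on $(-\infty,0]$.

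A convex function of one variable has non-decreasing difference quotients. Comparing a chord inside $(-\infty,0]$, whose slope is zero, with any chord to the right, every slope $(h(t)-h(s))/(t-s)$ with $0\le s<t$ is at least zero. Hence $h$ is non-decreasing on $[0,\infty)$, which is precisely monotonicity of $f$ in its first argument, with the other arguments fixed.

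The only subtlety is to state the difference-quotient monotonicity carefully at the boundary point $s=0$. It suffices to compare the chord $[-1,0]$ with the chords $[0,t]$ and $[s,t]$ for $0\le s<t$.
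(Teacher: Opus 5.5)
Your proof is correct and follows essentially the same route as the paper. Sufficiency uses $\pos{\lambda x_1+(1-\lambda)z_1}\le\lambda\pos{x_1}+(1-\lambda)\pos{z_1}$ followed by monotonicity and then joint convexity of $f$; necessity restricts $F$ to the half-space $x_1\ge0$ for convexity and uses that the convex one-variable function $h$ is constant on $(-\infty,0]$ for monotonicity, where your chord-slope comparison spells out the step the paper leaves implicit.
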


\begin{proof}
Suppose first that $f$ is jointly convex and non-decreasing in its first
argument. Since
\begin{equation*}
  \pos{\lambda x_1+(1-\lambda)z_1}
  \leq
  \lambda\pos{x_1}+(1-\lambda)\pos{z_1},
\end{equation*}
while the remaining components are affine, monotonicity followed by
convexity of $f$ gives
\begin{equation*}
  F\bigl(\lambda\boldsymbol{x}+(1-\lambda)\boldsymbol{z}\bigr)
  \leq
  \lambda F(\boldsymbol{x})+(1-\lambda)F(\boldsymbol{z}).
\end{equation*}

Conversely, assume that $F$ is convex. On the half-space $x_1\geq0$, the
inner map is the identity, so $f$ is convex on its domain. Fixing
$(x_2,\ldots,x_m)$ gives the convex function
\begin{equation*}
  h(t)
  =
  f\bigl(\pos{t},x_2,\ldots,x_m\bigr).
\end{equation*}
Since $h$ is constant for $t\leq0$, convexity implies that it is
non-decreasing for $t\geq0$. Hence, $f$ is non-decreasing in its first
argument.
\end{proof}

The same argument applies when the positive-part operator acts on several
components. In that case, the outer function must be jointly convex and
non-decreasing in each rectified argument.

A second case relevant to cohesive laws arises when one component enters
through its magnitude.

\begin{lemma}[positive part and a magnitude]
\label[lemma]{lem:positive-magnitude-composition}
Let $  F(\boldsymbol{x})
  =
  f\bigl(\pos{x_1},\lvert x_2\rvert,x_3,\ldots,x_m\bigr).$
Then $F$ is convex in $\boldsymbol{x}$ if and only if $f$ is jointly
convex and non-decreasing in its first two arguments.
\end{lemma}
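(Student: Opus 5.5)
The argument follows the proof of \cref{lem:composed-convexity}, with one extra ingredient: the magnitude $\lvert x_2\rvert$ is convex but, unlike the positive part, not constant on a half-line. Throughout, write $f$ on its natural domain $[0,\infty)^2\times\R^{m-2}$, since the inner map $\boldsymbol{y}(\boldsymbol{x})=(\pos{x_1},\lvert x_2\rvert,x_3,\ldots,x_m)$ has exactly this range.

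\emph{Sufficiency.} Assume $f$ is jointly convex and non-decreasing in its first two arguments. Both inner components are convex:
\begin{equation*}
  \pos{\lambda x_1+(1-\lambda)z_1}\le\lambda\pos{x_1}+(1-\lambda)\pos{z_1},
  \qquad
  \lvert\lambda x_2+(1-\lambda)z_2\rvert\le\lambda\lvert x_2\rvert+(1-\lambda)\lvert z_2\rvert .
\end{equation*}
The remaining components are affine. I would apply monotonicity of $f$ in the first two arguments to pass to the componentwise upper bound, and then use joint convexity of $f$. This yields $F(\lambda\boldsymbol{x}+(1-\lambda)\boldsymbol{z})\le\lambda F(\boldsymbol{x})+(1-\lambda)F(\boldsymbol{z})$, which is the composition rule of \cref{app:convex}. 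One point needs checking: both points fed to $f$ must lie in its domain. They do, because the upper bound is a convex combination of points of $[0,\infty)^2\times\R^{m-2}$.

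\emph{Necessity, convexity of $f$.} Assume $F$ is convex. On the closed orthant $\{x_1\ge0,\ x_2\ge0\}$ the inner map is the identity, and this orthant is exactly the domain of $f$. Restricting the convex function $F$ to this convex set therefore shows that $f$ is jointly convex on its domain.

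\emph{Necessity, monotonicity in the first argument.} Fix $(x_2,\ldots,x_m)$ and consider $h(t)=F(t,x_2,\ldots)$. This function is convex and constant for $t\le0$. Any convex function on $\R$ that is constant on a half-line $(-\infty,0]$ is non-decreasing on $[0,\infty)$: if $h(t_2)<h(t_1)$ for some $0\le t_1<t_2$, then convexity forces $h$ to be strictly decreasing on $(-\infty,t_1]$, contradicting constancy. Hence $f$ is non-decreasing in its first argument.

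\emph{Necessity, monotonicity in the second argument.} This is the step I expect to be the main obstacle, because $\lvert x_2\rvert$ is not constant on any half-line, so the argument above does not transfer directly. I would instead use evenness. Fix the other arguments and set $g(s)=F(x_1,s,\ldots)$. Then $g$ is convex and even, $g(s)=g(-s)$. For $0\le s_1<s_2$, write $s_1$ as a convex combination of $-s_2$ and $s_2$ with weights $\mu=(s_2-s_1)/(2s_2)$ and $1-\mu$. Convexity gives
\begin{equation*}
  g(s_1)\le\mu\,g(-s_2)+(1-\mu)\,g(s_2)=g(s_2).
\end{equation*}
Since $g(s)=f(\pos{x_1},s,\ldots)$ for $s\ge0$, this shows $f$ is non-decreasing in its second argument and completes the proof.
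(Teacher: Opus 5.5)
Your proof is correct and follows essentially the same route as the paper's. Sufficiency comes from convexity of $\pos{x_1}$ and $\lvert x_2\rvert$ combined with monotonicity and joint convexity of $f$; necessity comes from restricting to the orthant, applying the constant-on-a-half-line argument of \cref{lem:composed-convexity} to the first argument, and using evenness plus convexity for the second, where you additionally spell out the domain check and the chord estimate that the paper only states.
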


\begin{proof}
Sufficiency follows from the convexity of $\pos{x_1}$ and
$\lvert x_2\rvert$, together with joint convexity and monotonicity of $f$
in the corresponding arguments.

Conversely, restrict to $x_1\geq0$ and $x_2\geq0$, where the inner map is
the identity. This gives joint convexity of $f$. Monotonicity in the first
argument follows from \cref{lem:composed-convexity}. For fixed values of
the remaining arguments,
\begin{equation*}
  t
  \mapsto
  f\bigl(x_1,\lvert t\rvert,x_3,\ldots,x_m\bigr)
\end{equation*}
is convex and even. It is therefore non-decreasing for $t\geq0$, which
gives monotonicity of $f$ in its second argument.
\end{proof}

We now specialize these results to the interface separation. The coordinate
map $  \boldsymbol{\delta}
  \mapsto
  (\delta_n,\delta_t)$
is affine and therefore does not introduce any additional condition.

If the two sliding directions are distinguished, the active energy has the
form $  \psi^{e}_{+}(\boldsymbol{\delta})
  =
  \widehat{\psi}^{e}_{+}
  \bigl(\pos{\delta_n},\delta_t\bigr).$
By \cref{lem:composed-convexity}, it is convex in
$\boldsymbol{\delta}$ if and only if
$\widehat{\psi}^{e}_{+}$ is jointly convex and non-decreasing in its first
argument.

More commonly, symmetry with respect to the sliding direction is imposed: $  \psi^{e}_{+}(\boldsymbol{\delta})
  =
  \widehat{\psi}^{e}_{+}
  \bigl(\pos{\delta_n},\lvert\delta_t\rvert\bigr).$
In this case, \cref{lem:positive-magnitude-composition} requires
$\widehat{\psi}^{e}_{+}$ to be jointly convex and non-decreasing in both
arguments.

Similarly, $  \psi^{e}_{-}(\boldsymbol{\delta})
  =
  \widehat{\psi}^{e}_{-}\bigl(\pos{-\delta_n}\bigr)$
is convex in $\boldsymbol{\delta}$ if and only if
$\widehat{\psi}^{e}_{-}$ is convex and non-decreasing.

\section{Hyperparameter Settings and Implementation Details}
\label{app:hyperparams}
All neural-network models and training procedures are implemented in Python using PyTorch \citep{paszke2019pytorch}. 

The active potential is represented by an ICNN with softplus-based
activations. 
To maintain a non-vanishing activation slope, we use
$\sigma(h)=\operatorname{softplus}(h)+10^{-2}h$. The added linear term preserves convexity and monotonicity while reducing near-saturation of the activation. The separation inputs and energy output of the ICNN are nondimensionalized
using characteristic scales so that their numerical magnitudes remain of order
one. The separation is normalized by the separation corresponding to the peak
traction, while the energy is normalized by the area under the corresponding
traction--separation curve.

The inverse resistance is represented by the MNN.
In mixed-mode problems, the confluence and nonlinear streams use the same widths. The
full-MNN is considered only in the pure-mode interpolation study of \cref{sec:examp-MNN-MRQS}; in that case, the unrestricted streams are omitted and the architecture is specified by the exponential-stream widths alone. The driving force $Y$ is passed directly to the MNN. A small linear term
$\varepsilon Y$ is added to the MNN output to prevent the slope from becoming
arbitrarily small when the bounded activation functions approach saturation.
We use $\varepsilon=10^{-4}$.

The restriction of the learned inverse resistance to the admissible damage
range is imposed pointwise at each constitutive update. To improve numerical
smoothness near damage onset during training, we replace the positive-part
operator by the scaled softplus approximation
\begin{equation}
  d_{\mathrm{tr}}
  =
  \min\left(
    \langle \hat R(Y,\beta)\rangle_{\kappa},
    1
  \right),
  \qquad
  \langle x\rangle_{\kappa}
  \coloneqq
  {\kappa}\log\left(1+\exp(x/\kappa)\right),
  \label{eq:soft_clamp}
\end{equation}
where $\kappa>0$ controls the width of the smoothed transition. The onset corner is
therefore regularized, while the upper bound remains a hard saturation so that
$d_{\mathrm{tr}}\le1$ and complete damage is attained exactly. As $\kappa\to0$,
$\langle x\rangle_{\kappa}\to\pos{x}$, and \cref{eq:soft_clamp} recovers the hard
clamp in the limiting case.

All models are trained using Adam \citep{diederik2014adam} with an initial
learning rate of $3\times10^{-3}$ and cosine annealing with warm restarts
\citep{loshchilov2016sgdr}. Unless otherwise stated, the learning rate is
annealed to $\eta_{\min}=10^{-5}$ within each cycle. The first cycle spans
one fifth of the total epoch budget, and the cycle length is doubled after
each restart. For the examples in \cref{sec:examp-dimitri}, we instead use a
single restart after the first fifth of the epoch budget. The second cycle
then spans the remaining epochs and anneals to
$\eta_{\min}=3\times10^{-4}$. Gradient norms are clipped at $50$
\citep{pascanu2013difficulty}.

Problem-specific details are provided below. 

\paragraph{\cref{sec:ex-mode-I}} 
The study of \cref{fig:flex} uses an ICNN of three
hidden layers of eight units each and an
MNN of two
hidden layers of four units each; the unload--reload study of
\cref{fig:dissipation_multiunload} uses a larger ICNN of three
hidden layers of sixteen units each and an
MNN of two
hidden layers of eight units each, since both the loading and the unloading responses are more complex. Both are trained for $40{,}000$ epochs with  $\kappa=0.01$.

\paragraph{\cref{sec:examp-MNN-MRQS}}
The ICNN is held at three
hidden layers of eight units each in every case, so the comparison isolates the
inverse-resistance representation. 
Three resistance capacities are considered for each architecture --- two hidden
layers of two, four, or eight units --- giving $42$, $114$, and $354$ parameters for the MNN and $16$, $38$, and $106$ for the full-MNN.
The confluence and nonlinear streams use the same widths.
Each configuration is trained for $30{,}000$ epochs with
$\kappa=0.03$. An $L_2$ regularization penalty with coefficient $10^{-5}$ is applied to the MNN weights and biases to reduce overfitting. 

\paragraph{\cref{sec:examp-dimitri}}
The ICNN and MNN each use two hidden layers with eight units per layer. Training is performed for $50{,}000$ epochs with $\kappa=0.03$.

No systematic hyperparameter search or architecture optimization was performed; the network sizes were selected empirically to provide sufficient representational capacity for each example. The reported configurations should therefore not be interpreted as optimal architectures. 

\section{Additional Mixed-Mode Results}
\label{app:mixed-mode-result}

This appendix provides additional results for the mixed-mode example of
\cref{sec:examp-MNN-MRQS}. We report the optimization histories for the
different MNN and full-MNN capacities and compare the active potentials and
damage resistances identified by the highest-capacity models.

\Cref{fig:mnn_mrqs_mm_loss} shows the training loss for the three network
capacities considered for each inverse-resistance representation, with five
random initializations per configuration. The optimization exhibits greater
sensitivity to initialization for the smaller networks, while the
highest-capacity models attain the lowest terminal losses. The transient
increases in the loss coincide with the learning-rate restarts described in
\cref{app:hyperparams}.

\begin{figure}[htbp]
     \centering
     \begin{subfigure}[b]{\textwidth}
         \centering
         \includegraphics[width=0.32\linewidth]{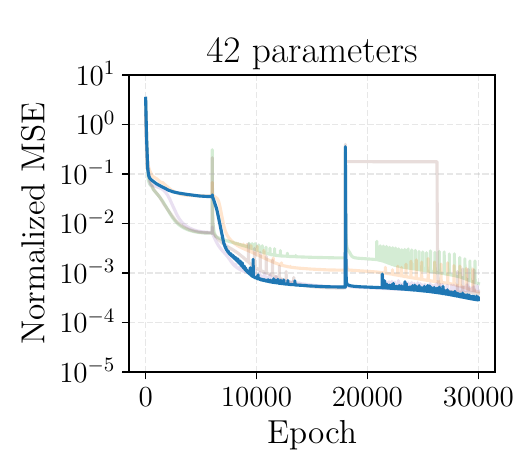}\hfill
         \includegraphics[width=0.32\linewidth]{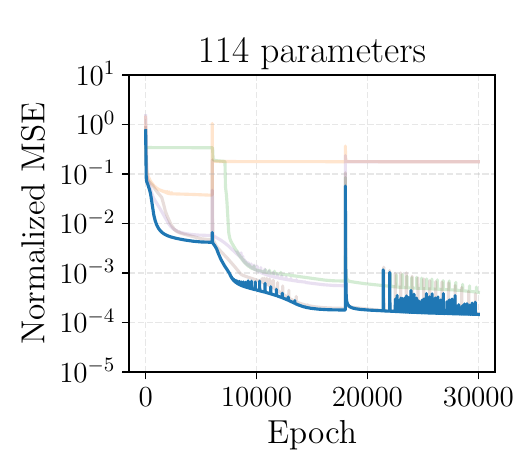}\hfill
         \includegraphics[width=0.32\linewidth]{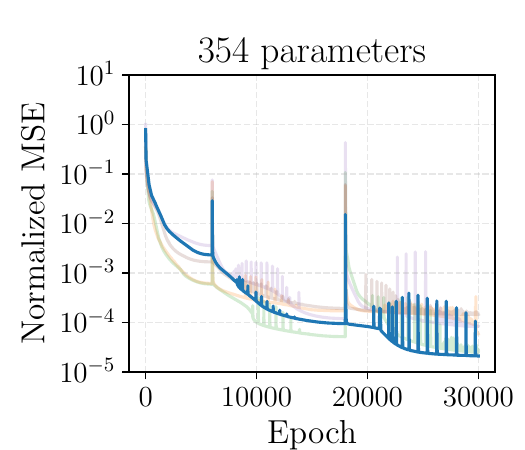}
         \caption{MNN}
         \label{fig:mnn_mrqs_mm_loss_mnn}
     \end{subfigure}\\[1ex]
     \begin{subfigure}[b]{\textwidth}
         \centering
         \includegraphics[width=0.32\linewidth]{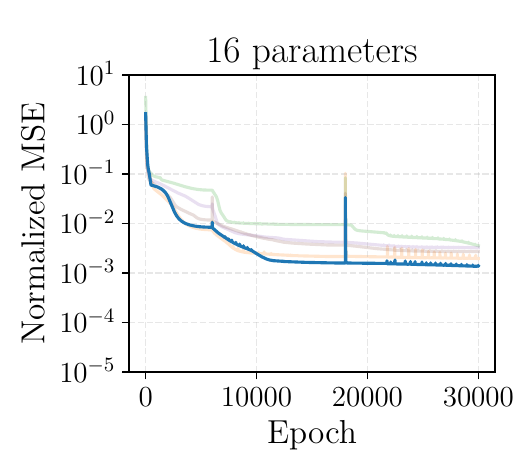}\hfill
         \includegraphics[width=0.32\linewidth]{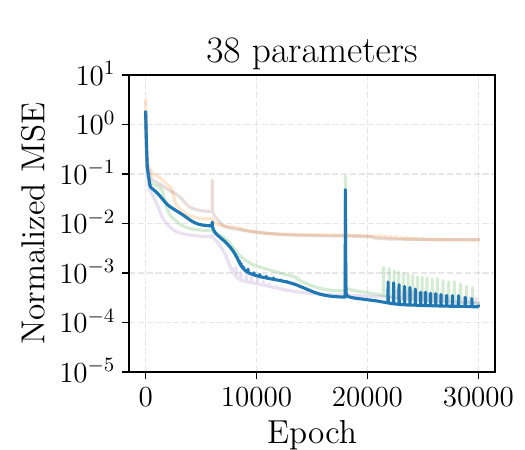}\hfill
         \includegraphics[width=0.32\linewidth]{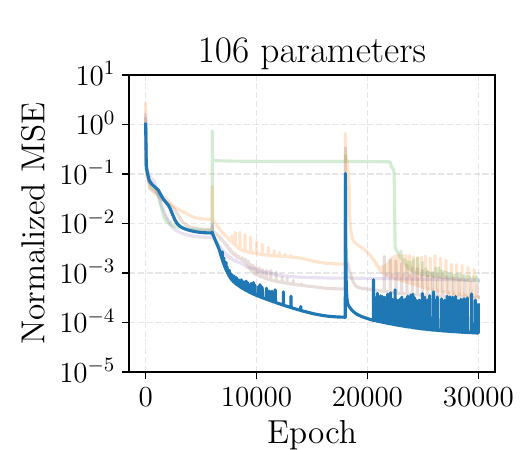}
         \caption{full-MNN}
         \label{fig:mnn_mrqs_mm_loss_fmnn}
     \end{subfigure}
     \caption{Training-loss histories for (a) the MNN and (b) the full-MNN. Three
     resistance-network capacities are considered, with five random
     initializations for each configuration. Panel titles give the number of
     resistance-network parameters. The solid blue curve denotes the
     lowest-loss realization and the faint curves denote the remaining
     initializations.}
     \label{fig:mnn_mrqs_mm_loss}
\end{figure}

Although the MNN and full-MNN reproduce the observed pure-mode
traction--separation responses with comparable accuracy, the corresponding
latent constitutive functions need not be unique.
\Cref{fig:mnn_mrqs_mm_psi,fig:mnn_mrqs_mm_R} show the active potential and
damage resistance identified by the highest-capacity model of each
architecture. The two models recover noticeably different
$\psi^{e}$ and $r$ while producing similar responses on the calibration paths.
This illustrates that traction--separation data constrain the observable
constitutive response more directly than the individual latent functions.

The difference between the resistance surfaces also reflects the distinct
assumptions imposed by the two architectures. The MNN leaves the dependence on
$\beta$ unrestricted, whereas the full-MNN constrains the resistance to
increase with mode mixity. Consequently, the full-MNN produces the monotonic
fracture-energy trend reported in \cref{fig:mixed_mode_beta_sweep_ts}, while
the unconstrained MNN admits the non-monotonic interpolation observed there.

\begin{figure}[htbp]
  \centering
  \begin{subfigure}[b]{0.35\linewidth}
    \centering
    \includegraphics[width=\linewidth]{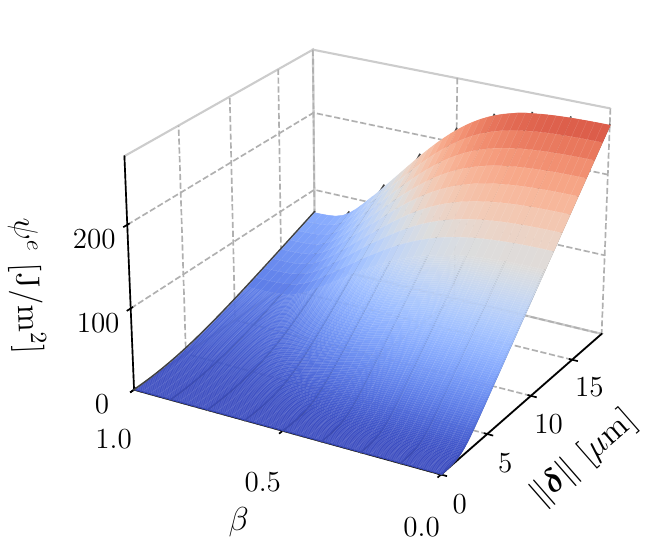}
    \caption{MNN}
    \label{fig:beta_surf_psi_mnn}
  \end{subfigure}%
  \hspace{0.04\linewidth}%
  \begin{subfigure}[b]{0.35\linewidth}
    \centering
    \includegraphics[width=\linewidth]{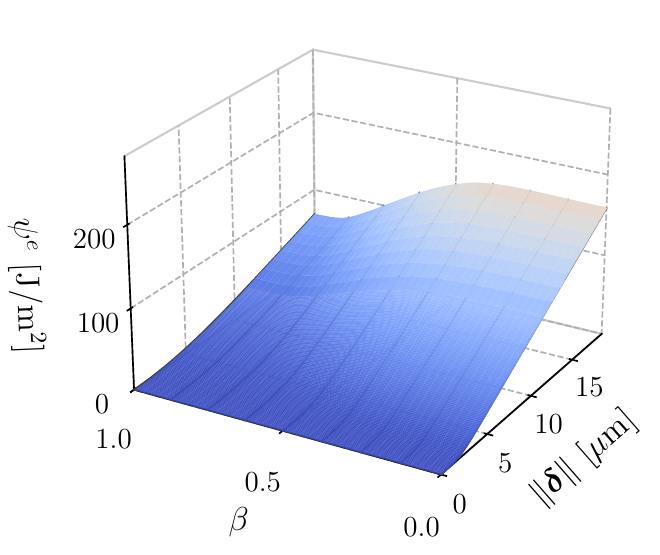}
    \caption{full-MNN}
    \label{fig:beta_surf_psi_fmnn}
  \end{subfigure}
  \caption{Active potential $\psi^{e}$ identified by (a) the MNN and
  (b) the full-MNN models, evaluated along proportional loading directions as
  a function of separation magnitude $\lVert\boldsymbol{\delta}\rVert$ and
  mode mixity $\beta$.}
  \label{fig:mnn_mrqs_mm_psi}
\end{figure}

\begin{figure}[htbp]
  \centering
  \begin{subfigure}[b]{0.35\linewidth}
    \centering
    \includegraphics[width=\linewidth]{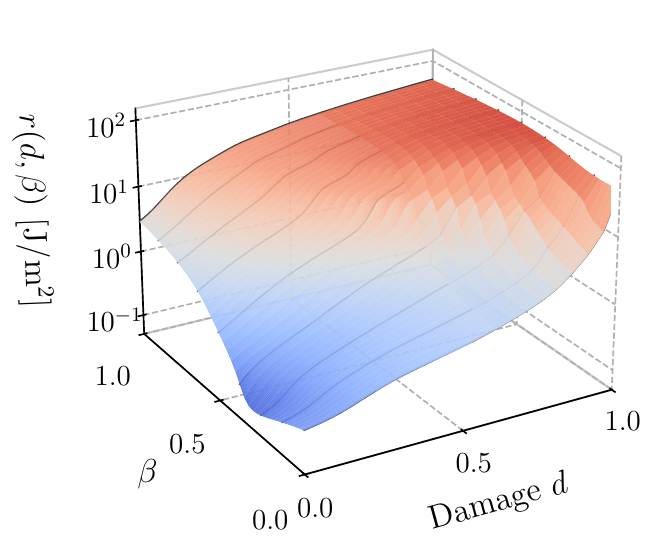}
    \caption{MNN}
    \label{fig:beta_surf_r_mnn}
  \end{subfigure}%
  \hspace{0.04\linewidth}%
  \begin{subfigure}[b]{0.35\linewidth}
    \centering
    \includegraphics[width=\linewidth]{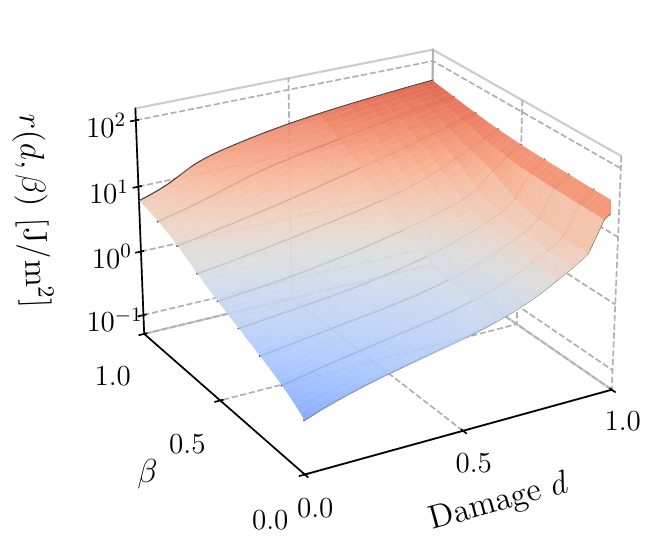}
    \caption{full-MNN}
    \label{fig:beta_surf_r_fmnn}
  \end{subfigure}
  \caption{Damage resistance $r(d,\beta)$ identified by (a) the MNN and
  (b) the full-MNN models.}
  \label{fig:mnn_mrqs_mm_R}
\end{figure}

\section{Reference Cohesive Laws for the Classical-Model Study}
\label{app:dimitri_laws}

The three reference cohesive laws used in \cref{sec:examp-dimitri} are
summarized below. We consider the modified exponential law of
\citet{van2006improved}, a bilinear damage law based on
\citet{camanho2003numerical}, and the PPR potential of \citet{Park2009}.
The parameters are chosen as
\begin{equation}
  T_n = 3~\mathrm{MPa},
  \qquad
  T_t = 6~\mathrm{MPa},
  \qquad
  G_{Ic} = 4~\mathrm{J/m^2}.
\end{equation}
For the exponential and bilinear laws, we set
$G_{IIc}=12~\mathrm{J/m^2}$, while the PPR law uses
$G_{IIc}=6~\mathrm{J/m^2}$. Synthetic data are generated along proportional
loading paths with fixed mode mixity
$\beta=\lvert\delta_t\rvert/(\pos{\delta_n}+\lvert\delta_t\rvert)$, as defined
in \cref{eq:mode-mixity}. Only positive normal and tangential separations are
used in the reference loading paths.

\begin{remark}
In the following, some symbols are reused locally to avoid an unnecessary proliferation of notation; their meaning will be clear from context.
\end{remark}

\paragraph{Exponential law (CZM-Exp).}
We use the modified exponential mixed-mode law of
\citet{van2006improved},
\begin{align}
  t_n
  &=
  T_n\,e\,x\,e^{-x}e^{-y^2},
  \\
  t_t
  &=
  T_t\,\sqrt{2e}\,y(1+x)e^{-x}e^{-y^2},
\end{align}
where
\begin{equation}
  x=\frac{\delta_n}{\delta_n^0},
  \qquad
  y=\frac{\delta_t}{\delta_t^0},
\end{equation}
and
\begin{equation}
  \delta_n^0
  =
  \frac{G_{Ic}}{eT_n},
  \qquad
  \delta_t^0
  =
  \frac{G_{IIc}}{\sqrt{e/2}\,T_t},
\end{equation}
with $e=\exp(1)$. 
Here $x$ and $y$ denote the normalized normal and tangential separations, respectively.
These definitions give the prescribed pure-mode peak
tractions $T_n$ and $T_t$ and fracture energies $G_{Ic}$ and $G_{IIc}$.

\paragraph{Bilinear law (CZM-BL).}
We use a bilinear damage law based on \citet{camanho2003numerical}, combined
here with a power-law mixed-mode fracture criterion. A common penalty
stiffness
\begin{equation}
  K=10^{13}~\mathrm{Pa/m}
\end{equation}
is used in the normal and tangential directions. The pure-mode damage-onset
separations are
\begin{equation}
  \delta_n^0=\frac{T_n}{K},
  \qquad
  \delta_t^0=\frac{T_t}{K}.
\end{equation}
Defining the effective separation and loading-direction ratio as
\begin{equation}
  \delta_m
  =
  \sqrt{\pos{\delta_n}^{\,2}+\delta_t^2},
  \qquad
  \rho
  =
  \frac{\delta_t}{\pos{\delta_n}},
\end{equation}
the mixed-mode onset separation is
\begin{equation}
  \delta_m^0
  =
  \delta_n^0\delta_t^0
  \sqrt{
    \frac{1+\rho^2}
    {(\delta_t^0)^2+\rho^2(\delta_n^0)^2}
  }.
\end{equation}
Its pure Mode~II limit is $\delta_m^0=\delta_t^0$.

The mixed-mode fracture energy is prescribed by
\begin{equation}
  G_c(\beta)
  =
  \left[
    \left(
      \frac{\cos^2\theta}{G_{Ic}}
    \right)^{\gamma}
    +
    \left(
      \frac{\sin^2\theta}{G_{IIc}}
    \right)^{\gamma}
  \right]^{-1/\gamma},
  \qquad
  \gamma=1,
\end{equation}
where
\begin{equation}
  \cos^2\theta
  =
  \frac{\pos{\delta_n}^{\,2}}{\delta_m^2},
  \qquad
  \sin^2\theta
  =
  \frac{\delta_t^2}{\delta_m^2}.
\end{equation}
The corresponding complete-failure separation is
\begin{equation}
  \delta_m^f
  =
  \frac{2G_c(\beta)}{K\delta_m^0}.
\end{equation}

Let
\begin{equation}
  \bar{\delta}_m(t)
  =
  \max_{\tau\le t}\delta_m(\tau)
\end{equation}
denote the maximum effective separation attained along the loading history.
The damage variable is
\begin{equation}
  d
  =
  \min\left[
    \max\left(
      \frac{
        \delta_m^f(\bar{\delta}_m-\delta_m^0)
      }{
        \bar{\delta}_m(\delta_m^f-\delta_m^0)
      },
      0
    \right),
    1
  \right].
\end{equation}
The tractions are then
\begin{equation}
  t_n
  =
  (1-d)K\pos{\delta_n},
  \qquad
  t_t
  =
  (1-d)K\delta_t.
\end{equation}

\paragraph{Park--Paulino--Roesler law (PPR).}
The PPR potential of \citet{Park2009} is used with the shape parameters
$\alpha=5$ and $\beta_{\mathrm p}=1.3$, where $\beta_{\mathrm p}$ denotes
the tangential PPR exponent and is distinct from the mode-mixity variable
$\beta$. The initial-slope parameters are
$\lambda_n=0.15$ and $\lambda_t=0.30$. The corresponding exponents are
\begin{equation}
  m
  =
  \frac{
    \alpha(\alpha-1)\lambda_n^2
  }{
    1-\alpha\lambda_n^2
  },
  \qquad
  n
  =
  \frac{
    \beta_{\mathrm p}(\beta_{\mathrm p}-1)\lambda_t^2
  }{
    1-\beta_{\mathrm p}\lambda_t^2
  }.
\end{equation}
Following \citet{Park2009}, $m$ and $n$ denote the PPR shape exponents.

The energy constants are
\begin{equation}
  \Gamma_n
  =
  \left(-G_{Ic}\right)^{
    \pos{G_{Ic}-G_{IIc}}/(G_{Ic}-G_{IIc})
  }
  \left(\frac{\alpha}{m}\right)^m,
\end{equation}
and
\begin{equation}
  \Gamma_t
  =
  \left(-G_{IIc}\right)^{
    \pos{G_{IIc}-G_{Ic}}/(G_{IIc}-G_{Ic})
  }
  \left(\frac{\beta_{\mathrm p}}{n}\right)^n.
\end{equation}
For the present choice $G_{IIc}>G_{Ic}$, these reduce to
\begin{equation}
  \Gamma_n
  =
  \left(\frac{\alpha}{m}\right)^m,
  \qquad
  \Gamma_t
  =
  -G_{IIc}
  \left(\frac{\beta_{\mathrm p}}{n}\right)^n.
\end{equation}

The final normal and tangential separations are determined from the prescribed
pure-mode cohesive strengths,
\begin{align}
  \delta_n^f
  &=
  \frac{G_{Ic}}{T_n}
  \alpha\lambda_n(1-\lambda_n)^{\alpha-1}
  \left(
    \frac{\alpha}{m}+1
  \right)
  \left(
    \frac{\alpha}{m}\lambda_n+1
  \right)^{m-1},
  \\
  \delta_t^f
  &=
  \frac{G_{IIc}}{T_t}
  \beta_{\mathrm p}\lambda_t
  (1-\lambda_t)^{\beta_{\mathrm p}-1}
  \left(
    \frac{\beta_{\mathrm p}}{n}+1
  \right)
  \left(
    \frac{\beta_{\mathrm p}}{n}\lambda_t+1
  \right)^{n-1}.
\end{align}
Introducing
\begin{equation}
  \hat{\delta}_n
  =
  \frac{\delta_n}{\delta_n^f},
  \qquad
  \hat{\delta}_t
  =
  \frac{\delta_t}{\delta_t^f},
\end{equation}
the normal traction is
\begin{align}
  t_n
  &=
  \frac{\Gamma_n}{\delta_n^f}
  \left[
    m(1-\hat{\delta}_n)^{\alpha}
    \left(
      \frac{m}{\alpha}+\hat{\delta}_n
    \right)^{m-1}
    -
    \alpha(1-\hat{\delta}_n)^{\alpha-1}
    \left(
      \frac{m}{\alpha}+\hat{\delta}_n
    \right)^m
  \right]
  \notag\\
  &\quad\times
  \left[
    \Gamma_t
    (1-\hat{\delta}_t)^{\beta_{\mathrm p}}
    \left(
      \frac{n}{\beta_{\mathrm p}}+\hat{\delta}_t
    \right)^n
    +
    \pos{G_{IIc}-G_{Ic}}
  \right],
\end{align}
and the tangential traction is
\begin{align}
  t_t
  &=
  \frac{\Gamma_t}{\delta_t^f}
  \left[
    n(1-\hat{\delta}_t)^{\beta_{\mathrm p}}
    \left(
      \frac{n}{\beta_{\mathrm p}}+\hat{\delta}_t
    \right)^{n-1}
    -
    \beta_{\mathrm p}
    (1-\hat{\delta}_t)^{\beta_{\mathrm p}-1}
    \left(
      \frac{n}{\beta_{\mathrm p}}+\hat{\delta}_t
    \right)^n
  \right]
  \notag\\
  &\quad\times
  \left[
    \Gamma_n
    (1-\hat{\delta}_n)^{\alpha}
    \left(
      \frac{m}{\alpha}+\hat{\delta}_n
    \right)^m
    +
    \pos{G_{Ic}-G_{IIc}}
  \right].
\end{align}
These expressions are evaluated over
$0\le\delta_n\le\delta_n^f$ and
$0\le\delta_t\le\delta_t^f$ for the proportional loading paths considered
here. The corresponding traction component is set to zero once its terminal
separation is exceeded.

\end{document}